\pdfpagewidth=\paperwidth
\pdfpageheight=\paperheight
\documentclass[12pt, letterpaper]{article}
\pdfoutput=1

\DeclareFixedFont{\MyTitleFont}{OT1}{ptm}{m}{n}{18pt}
\DeclareFixedFont{\MyAuthorFont}{OT1}{ptm}{m}{n}{13pt}
\DeclareFixedFont{\MyAbstractTitleFont}{OT1}{ptm}{m}{it}{12pt}
\DeclareFixedFont{\MyAbstractFont}{OT1}{ptm}{m}{it}{11pt}
\DeclareFixedFont{\MySubtitleFont}{OT1}{ptm}{m}{n}{15pt}
\DeclareFixedFont{\MySubSubtitleFont}{OT1}{ptm}{m}{n}{13pt}
\DeclareFixedFont{\MySubSubSubtitleFont}{OT1}{ptm}{m}{n}{11pt}
\DeclareFixedFont{\MyTextFont}{OT1}{ptm}{m}{n}{11pt}

\usepackage{geometry}
\geometry{letterpaper, top = 1in, right = 1in, bottom = 1in, left = 1in}
\linespread{1.25}

\usepackage{titling}

\setlength{\droptitle}{-5em}

\title{\MyTitleFont A Simplified Approach to Analyze Complementary Sensitivity Trade-offs in Continuous-Time and Discrete-Time Systems   \vspace{0em}}

\author{\MyAuthorFont Neng Wan$^1$, Dapeng Li$^2$, and Naira Hovakimyan$^1$}
\date{}

\usepackage{titlesec}

\usepackage[marginal]{footmisc}
\setlength{\footnotemargin}{0pt}

\usepackage{abstract}

\setlength{\absleftindent}{0em}
\setlength{\absrightindent}{0em}
\setlength{\abstitleskip}{-1.5em}

\usepackage[sort]{cite}

\usepackage{indentfirst}
\parindent = 17pt

\usepackage{amsthm}
\usepackage{amsfonts}
\usepackage{amsmath}
\usepackage{booktabs}
\usepackage{bm}
\usepackage{mathptmx}
\usepackage{subfigure}
\usepackage{enumerate}
\usepackage[mathcal]{euscript}
\usepackage{float}
\usepackage{graphics} 
\usepackage{epsfig} 
\usepackage[bookmarks]{hyperref}
\hypersetup{colorlinks = true, citecolor = red, linkcolor = blue, urlcolor = black}

\newtheorem{theorem}{Theorem}
\newtheorem{remark}{Remark}

\newtheorem{lemma}{Lemma}

\newtheorem{corollary}[theorem]{Corollary}

\allowdisplaybreaks

\begin{document}

\maketitle


\footnotetext[0]{*This work was supported in part by AFOSR and NSF. \vspace{0.3em}}
\footnotetext[0]{$^{1}$Neng Wan and Naira Hovakimyan are with the Department of Mechanical Science and Engineering, University of Illinois at Urbana-Champaign, Urbana, IL 61801, USA. {\tt\small \{nengwan2, nhovakim\}@illinois.edu}.}
\footnotetext[0]{$^{2}$Dapeng Li is a Principal Scientist with the JD.com Silicon Valley Research Center, Santa Clara, CA 95054, USA. {\tt\small dapeng.li@jd.com}.}
\vspace{-4em}

\begin{abstract}
{\vspace{0.5em}
	A simplified approach is proposed to investigate the continuous-time and discrete-time complementary sensitivity Bode integrals (CSBIs) in this note. For continuous-time feedback systems with unbounded frequency domain, the CSBI weighted by $1/\omega^2$ is considered, where this simplified method reveals a more explicit relationship between the value of CSBI and the structure of the open-loop transfer function. With a minor modification of this method, the CSBI of discrete-time system is derived, and illustrative examples are provided. Compared with the existing results on CSBI, neither Cauchy integral theorem nor Poisson integral formula are used throughout the analysis, and the analytic constraint on the integrand is removed.
}
\end{abstract}
\vspace{-0.5em}

\titleformat*{\section}{\centering\MySubtitleFont}
\titlespacing*{\section}{0em}{1.25em}{1.25em}[0em]

\section{Introduction}\label{sec1}

This technical note extends the {\em simplified approach} for analysis of sensitivity Bode integrals from~\cite{Wu_TAC_1992} to complementary sensitivity Bode integrals (CSBIs). Sensitivity function  and complementary sensitivity function  are two critical transfer functions that provide insights into the influence of external disturbance on the error signal and the measurement output, respectively. Freudenberg and Looze showed in~\cite{Freudenberg_1985, Freudenberg_1987} that the integral over all frequencies of the logarithm of the absolute value of sensitivity function, $\ln |S(s)|$, is proportional to the sum of the unstable open-loop poles. Meanwhile, motivated by the well-known result for sensitivity function and complementary sensitivity functions~\cite{Seron_2012}, $S(s) + T(s) = 1$, it is natural  to believe that similar trade-off should also exist for $\ln |T(s)|$. However, as $s\rightarrow \infty$, the integrand $\ln |T(s)|$ and the corresponding integral grow to infinity in continuous-time systems, which puzzled the researchers for a few years~\cite{Sung_IJC_1989}.

Several efforts have been made to tackle this issue on complementary sensitivity Bode integral (CSBI) for continuous-time systems. One of the earliest results of Freudenberg and Looze~\cite{Freudenberg_1985} exploited the harmonic property of $\ln |T(s)|$ to define CSBI by multiplying this function with a Poisson kernel and using Poisson integral formula with a limiting argument~\cite{Seron_2012}. A more concise work on continuous-time CSBI was later done by Middleton~\cite{Middleton_1991}, who weighted $\ln |T(s)|$ by $1/\omega^2$ and adopted the Cauchy integral theorem. The  frequency inversion and Cauchy integral theorem in~\cite{Middleton_1991} require the inverse frequency function $\ln |T(1/s)|$ be analytic at $s = \infty$, such that it can be expanded as a Laurent series~\cite{Seron_2012}. In~\cite{Yu_IJRNC_2015} Yu \textit{et al.}  studied a CSBI weighted by $1 / (s^2 + \alpha^2)^k$, where $\alpha \in \mathbb{R}$ and $k \geq 1$, and an information-theoretic approach to derive the CSBI for continuous-time stochastic systems was presented in~\cite{Wan_CDC_2018}. The results on discrete-time CSBI, which has a bounded frequency domain, can be found in~\cite{Sung_IJC_1989, Okano_Auto_2009, Ishii_SCL_2011}.

The simplified approach from~\cite{Wu_TAC_1992} is extended here to analyze the CSBIs for both continuous-time and discrete-time systems. Compared with the prevailing results on CSBI of deterministic systems~\cite{Freudenberg_1985, Sung_IJC_1989, Middleton_1991, Seron_2012, Yu_IJRNC_2015}, the salient feature of this method is that neither Cauchy integral theorem nor Poisson integral formula are invoked when deriving the CSBIs, which consequently allows  to remove the analytic (harmonic) constraints on the integrands of CSBIs. In addition to a new approach to derive CSBI, this simplified approach also provides a more explicit explanation on how the complementary sensitivity property is impacted by the structure of an open-loop transfer function, \textit{e.g.} the distributions of zeros and poles, relative degree, number of pure integrators and leading coefficient. For continuous-time systems, we study the CSBI weighted by $1/\omega^2$ similar to~\cite{Middleton_1991}, and with a slight modification, the simplified approach is applied to investigating the discrete-time CSBI. A few illustrative examples are given at the end of this note.

This note is organized as follows: \hyperref[sec2]{Section 2} introduces the preliminaries; \hyperref[sec3]{Section 3} studies the CSBI of continuous-time systems; \hyperref[sec4]{Section 4} investigates the CSBI of discrete-time systems; several illustrative examples are shown in~\hyperref[sec5]{Section 5}, and~\hyperref[sec6]{Section 6} draws the conclusions.

\noindent \textit{Notation}: In this paper, we use $\ln(\cdot)$ to denote natural logarithm with the base of the mathematical constant $\textrm{e}$ and $\log(\cdot)$ to denote the logarithm with base $2$. For a complex number $a$, $|a|$ stands for the modulus. Complex variables are denoted as $s = j\omega$ and $z = \textrm{e}^{j\omega}$.

\section{Preliminaries and Problem Formulation}\label{sec2}

Background knowledge and some preliminary results on CSBI are stated in this section. Consider the following block diagram of a general feedback system, which can be used to describe both continuous-time and discrete-time systems.

\begin{figure}[H]
	\centering
	\vspace{-1em}\includegraphics[width=0.5\textwidth]{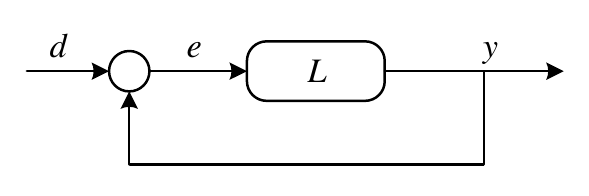}\vspace{-1em}
	\caption{\small General feedback system.}
\end{figure}

\noindent Here $L$ denotes the open-loop transfer function, $d$ is the external disturbance, $e$ is the error signal, and $y$ is the measurement output. The complementary sensitivity function $T(s)$ (or $T(z)$) is defined as the transfer function from external disturbance $d$ to measurement output $y$.

\titleformat*{\subsection}{\MySubSubtitleFont}
\titlespacing*{\subsection}{0em}{0.75em}{0.75em}[0em]
\subsection{Continuous-Time System}
For a continuous-time plant model $G(s)$ and a control mapping $C(s)$ the open-loop transfer function $L(s)$ can be generally expressed as
\begin{equation}\label{cont_tf}
	L(s) = G(s) \cdot C(s) = K \cdot \dfrac{\prod_{i=1}^{m}(s - z_i)}{s^{n-l} \cdot \prod_{i=1}^{l}(s - p_i)},
\end{equation}
where $K \in \mathbb{R}$, the relative degree is $\nu = n - m$ with $m \leq n$, $n-l \geq 0$ denotes the number of pure integrators, $z_i$ and $p_i$ respectively denote the zeros and poles of $L(s)$, and no $z_i$ or $p_i$ is at $s = 0$. When $\nu = 1$, the leading coefficient $K = \lim_{s \rightarrow \infty} s L(s)$. The complementary sensitivity function $T(s)$ for this continuous-time system is defined as
\begin{equation}\label{complementary}
	T(s) = \dfrac{L(s)}{1 + L(s)}.
\end{equation}
In this note, we consider the continuous-time CSBI defined as follows:
\begin{equation}\label{Cont_Bode}
	\dfrac{1}{2\pi} \int_{-\infty}^{\infty} \ln |T(s)| \ \dfrac{d\omega}{\omega^2},
\end{equation}
where a weighting function $1 / \omega^2$ is involved~\cite{Middleton_1991, Seron_2012}. Consider the following frequency transformation
\begin{equation}\label{freq_trans}
s = j\omega = \dfrac{1}{j\tilde{\omega}} = \tilde{s}^{-1},
\end{equation}
where the frequency variables satisfy $\omega = - \left(\tilde{\omega}\right)^{-1}$. By change of variables, we can rewrite CSBI in~\eqref{Cont_Bode} as follows
\begin{equation*}
	\dfrac{1}{2\pi} \int_{-\infty}^{\infty} \ln |\tilde{T}\left(\tilde{s}\right)| d\tilde{\omega},
\end{equation*}
where $\tilde{T}(\tilde{s}) = T(s)$. The following lemma states the earlier result on CSBI, which was obtained by resorting to Cauchy integral theorem~\cite{Middleton_1991, Seron_2012}.
\begin{lemma}\label{lem1}
	Let $z_{u_i}$'s be the non-minimum phase zeros of open-loop transfer function $L(s)$, and suppose that $L(0) \neq 0$. Then, assuming closed-loop stability, if $L(s)$ is a proper rational function, then
	\begin{equation*}
		\dfrac{1}{2\pi}\int_{-\infty}^{\infty} \ln \left| \dfrac{T(s)}{T(0)} \right| \dfrac{d\omega}{\omega^2} =  \sum_{i}z_{u_i}^{-1} + \dfrac{1}{2 \cdot T(0)}\lim_{s \rightarrow 0} \dfrac{d T(s)}{ds}.
	\end{equation*}
\end{lemma}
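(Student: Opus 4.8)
The plan is to bypass all complex-analytic tools and reduce the computation to the elementary antiderivative of $\ln(a^2+t^2)$, in the spirit of~\cite{Wu_TAC_1992}. First I would record the closed-loop characteristic polynomial
\begin{equation*}
	\phi(s) = s^{n-l}\prod_{i=1}^{l}(s-p_i) + K\prod_{i=1}^{m}(s-z_i),
\end{equation*}
which by closed-loop stability is Hurwitz of degree $n$, say $\phi(s)=c\prod_{k=1}^{n}(s-\alpha_k)$ with every $\alpha_k$ in the open left half-plane, so that $T(s)=K\prod_{i=1}^{m}(s-z_i)/\phi(s)$. Applying the frequency inversion~\eqref{freq_trans} and then multiplying numerator and denominator by $\tilde{s}^{n}$, $\tilde{T}(\tilde{s})$ becomes a rational function of $\tilde{s}$ whose gain is exactly $T(0)$, whose zeros are $\tilde{s}=0$ of multiplicity $\nu$ together with the inverted zeros $z_i^{-1}$, and whose poles are the inverted closed-loop roots $\alpha_k^{-1}$. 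The key bookkeeping observation is that $\tilde{T}$ has $\nu+m=n$ zeros and $n$ poles, so the numerator and denominator contributions are balanced.

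Next I would set $\tilde{s}=j\tilde{\omega}$ and split
\begin{equation*}
	\ln\Bigl|\frac{\tilde{T}(j\tilde{\omega})}{T(0)}\Bigr| = \nu\ln|\tilde{\omega}| + \sum_{i=1}^{m}\ln|j\tilde{\omega}-z_i^{-1}| - \sum_{k=1}^{n}\ln|j\tilde{\omega}-\alpha_k^{-1}|,
\end{equation*}
the normalization by $T(0)$ being exactly what removes the otherwise-divergent constant term. Using $\ln|j\tilde{\omega}-\gamma|=\tfrac12\ln\bigl((\mathrm{Re}\,\gamma)^2+(\tilde{\omega}-\mathrm{Im}\,\gamma)^2\bigr)$ and the identity $\int\ln(a^2+t^2)\,dt = t\ln(a^2+t^2)-2t+2a\arctan(t/a)$, the integral of each factor $\ln|j\tilde{\omega}-\gamma|$ over $[-R,R]$ equals $2R\ln R-2R+\pi|\mathrm{Re}\,\gamma|+o(1)$ as $R\to\infty$, the recentering by $\mathrm{Im}\,\gamma$ contributing only $o(1)$. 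Adding the $n$ zero-type and $n$ pole-type contributions, the $2R\ln R-2R$ parts cancel (this is where $\nu+m=n$ enters), and in the limit
\begin{equation*}
	\frac{1}{2\pi}\int_{-\infty}^{\infty}\ln\Bigl|\frac{T(s)}{T(0)}\Bigr|\frac{d\omega}{\omega^2} = \frac12\Bigl(\sum_{i=1}^{m}\bigl|\mathrm{Re}\,(z_i^{-1})\bigr| - \sum_{k=1}^{n}\bigl|\mathrm{Re}\,(\alpha_k^{-1})\bigr|\Bigr).
\end{equation*}

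It then remains to clean up the right-hand side with sign information. Since $\gamma\mapsto\gamma^{-1}$ maps the open left half-plane into itself, $|\mathrm{Re}\,(\alpha_k^{-1})|=-\mathrm{Re}\,(\alpha_k^{-1})$ for every closed-loop root, $|\mathrm{Re}\,(z_i^{-1})|=-\mathrm{Re}\,(z_i^{-1})$ for each minimum-phase zero, and $|\mathrm{Re}\,(z_i^{-1})|=\mathrm{Re}\,(z_i^{-1})$ for each non-minimum-phase zero $z_{u_i}$, so that $\sum_i|\mathrm{Re}\,(z_i^{-1})| = 2\sum_i\mathrm{Re}\,(z_{u_i}^{-1}) - \sum_i\mathrm{Re}\,(z_i^{-1})$. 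Real coefficients put the $z_i$ and the $\alpha_k$ into conjugate pairs, so $\sum_i\mathrm{Re}\,(z_i^{-1})=\sum_i z_i^{-1}$, $\sum_k\mathrm{Re}\,(\alpha_k^{-1})=\sum_k\alpha_k^{-1}$ and $\sum_i\mathrm{Re}\,(z_{u_i}^{-1})=\sum_i z_{u_i}^{-1}$, turning the right-hand side into $\sum_i z_{u_i}^{-1} + \tfrac12\bigl(\sum_k\alpha_k^{-1}-\sum_i z_i^{-1}\bigr)$. Finally, since the logarithmic derivative is $T'(s)/T(s)=\sum_i(s-z_i)^{-1}-\sum_k(s-\alpha_k)^{-1}$, setting $s=0$ gives $\tfrac{1}{T(0)}\lim_{s\to0}\tfrac{dT}{ds}=\sum_k\alpha_k^{-1}-\sum_i z_i^{-1}$, which produces the claimed formula.

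The hardest part will be making the term-by-term evaluation legitimate: each individual $\int_{-R}^{R}\ln|j\tilde{\omega}-\gamma|\,d\tilde{\omega}$ diverges, so one must group numerator and denominator factors before sending $R\to\infty$ and verify that (a) the $\ln R$ and $R$ divergences cancel because $\tilde{T}$ has equally many zeros and poles, and (b) recentering each quadratic interval costs only $o(1)$. Secondary points worth a line are the integrability of the original integrand at $\omega=0$ (where $\ln|T(j\omega)/T(0)|$ vanishes and is even, $T$ being analytic and nonzero there since $\phi(0)\neq0$ by stability and $L(0)\neq0$ by hypothesis), the biproper case $\nu=0$ (where $\phi$ has leading coefficient $1+K\neq0$), and any open-loop zeros lying on the imaginary axis.
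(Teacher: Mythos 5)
Your proposal is correct, but it is worth noting that the paper never proves Lemma~1 at all: it is quoted as a prior result of Middleton and Seron--Goodwin, obtained via the Cauchy integral theorem after expanding $\ln|T(1/s)|$ in a Laurent series, and the paper's own elementary derivations are reserved for Theorem~1 and Corollary~2, which are then observed to ``match'' Lemma~1. What you have done is supply an elementary, real-variable proof of Lemma~1 itself using exactly the paper's toolkit (frequency inversion $s=1/\tilde s$, the antiderivative $\int\ln(a^2+t^2)\,dt$ underlying Lemma~2, and sign bookkeeping on inverted roots), and the argument checks out: the gain of the inverted $\tilde T$ is exactly $T(0)$, so dividing by $T(0)$ kills the constant term whose integral is what forces the $\pm\infty$ cases in Theorem~1 and Corollary~2; the $2R\ln R-2R$ divergences cancel because numerator and denominator of $\tilde T$ both have degree $n$; and the recentering by $\mathrm{Im}\,\gamma$ indeed costs only $o(1)$ once the two endpoint contributions are combined. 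The main structural difference from the paper is that you avoid the paper's three-way case split on the number of pure integrators: by writing the closed-loop-pole sum as $\sum_k\alpha_k^{-1}=\sum_i z_i^{-1}+T'(0)/T(0)$ via the logarithmic derivative, you absorb in one stroke what the paper extracts case by case from matching the $\tilde s^{\,n-1}$ coefficients in~\eqref{C_Den2} (identities \eqref{coef2_case1}, \eqref{coef2_case2}, \eqref{coef2_case3}); this is cleaner and makes the agreement with Theorem~1 transparent (e.g.\ $T'(0)=0$ when $l\le n-2$, and $T'(0)/(2T(0))=-\prod_{i=1}^{n-1}(-p_i)/(2K\prod_{i=1}^m(-z_i))$ when $l=n-1$). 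The only loose ends are the ones you already flag as ``secondary'': the degenerate biproper case $K=-1$ (where $\deg\phi<n$, though the balance of zero and pole counts of $\tilde T$ survives because the deficit reappears as a factor $\tilde s^{\,n-q}$ with zero real-part contribution) and imaginary-axis open-loop zeros (harmless, since the $a=0$ antiderivative still applies); neither breaks the argument, but a complete write-up should include the two or three lines you promise for them.
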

\noindent The theoretical basis of the simplified approach for continuous-time system is stated in the following lemma~\cite{Wu_TAC_1992}.
\begin{lemma}\label{lem2}
	For complex numbers $a$ and $b$, we have
	\begin{equation}\label{lem2_eq}
		\int_{-\infty}^{\infty} \ln \left| \dfrac{(j\omega - a)}{(j\omega - b)} \right|^2 d\omega = 2\pi \cdot \left( \left| \mathrm{Re} \ a \right| - \left| \mathrm{Re} \ b \right| \right).
	\end{equation}
\end{lemma}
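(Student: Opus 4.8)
The plan is to evaluate the integral by reducing it to a one-dimensional real integral and computing directly, \emph{without} invoking any complex-analytic tool, in keeping with the theme of the note. Writing $a = \mathrm{Re}\,a + j\,\mathrm{Im}\,a$, one has $|j\omega - a|^2 = (\mathrm{Re}\,a)^2 + (\omega - \mathrm{Im}\,a)^2$, which depends on $a$ only through $|\mathrm{Re}\,a|$ and $\mathrm{Im}\,a$. Hence, setting $\alpha := |\mathrm{Re}\,a|$, $\beta := \mathrm{Im}\,a$, $\gamma := |\mathrm{Re}\,b|$, and $\delta := \mathrm{Im}\,b$, the left-hand side of~\eqref{lem2_eq} equals
\[
\int_{-\infty}^{\infty} \Bigl[ \ln\bigl(\alpha^2 + (\omega-\beta)^2\bigr) - \ln\bigl(\gamma^2 + (\omega-\delta)^2\bigr) \Bigr]\, d\omega ,
\]
and it suffices to show that this equals $2\pi(\alpha - \gamma)$.

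The main difficulty is that this integrand is \emph{not} absolutely integrable: as $|\omega|\to\infty$ each logarithm grows like $2\ln|\omega|$, and although these leading terms cancel, the difference decays only like $1/\omega$. I would therefore first split the integrand (for $\omega \neq \beta,\delta$) as
\[
\ln\!\Bigl(1 + \frac{\alpha^2}{(\omega-\beta)^2}\Bigr) - \ln\!\Bigl(1 + \frac{\gamma^2}{(\omega-\delta)^2}\Bigr) + 2\ln\left|\frac{\omega-\beta}{\omega-\delta}\right| ,
\]
integrate over $[-R,R]$, and pass to the limit $R\to\infty$ term by term (legitimate once each of the three limits is shown to exist). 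The first two terms decay like $1/\omega^2$ and have only integrable logarithmic singularities, hence are absolutely integrable; by translation invariance $\int_{-\infty}^{\infty}\ln(1+\alpha^2/(\omega-\beta)^2)\,d\omega = \int_{-\infty}^{\infty}\ln(1+\alpha^2/\omega^2)\,d\omega$, and the substitution $\omega = \alpha t$ (when $\alpha>0$; the case $\alpha=0$ being trivial) together with a single integration by parts gives the value $2\pi\alpha$. The second term contributes $-2\pi\gamma$ in the same way.

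It then remains to check that the conditionally convergent tail term vanishes, i.e. $\int_{-\infty}^{\infty}\ln|(\omega-\beta)/(\omega-\delta)|\,d\omega = 0$. Using the elementary antiderivative $(\omega - c)\bigl(\ln|\omega-c| - 1\bigr)$ of $\ln|\omega-c|$, the partial integral over $[-R,R]$ reduces to a combination of terms $(R\pm c)\ln(R\pm c)$; expanding $\ln(R\pm c) = \ln R + O(1/R)$ shows that the divergent $R\ln R$ contributions cancel between the $\beta$- and the $\delta$-factor, as do the $O(1)$ remainders, leaving something that tends to $0$. Adding the three limits yields $2\pi\alpha - 2\pi\gamma = 2\pi(|\mathrm{Re}\,a| - |\mathrm{Re}\,b|)$. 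I expect the only real care to be in this last bookkeeping step — ensuring the termwise passage to the limit is justified and that the $R\ln R$ terms genuinely cancel; the rest is a short computation involving one rescaling, one integration by parts, and one arctangent evaluation.
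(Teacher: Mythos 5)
Your argument is correct, and it is essentially the approach the paper itself intends: the note does not reproduce a proof of Lemma~2 but defers to the cited reference, remarking only that the proof ``requires some elementary techniques, such as integration by parts,'' which is precisely what you carry out (the rescaling to $\int_{-\infty}^{\infty}\ln(1+\alpha^2/\omega^2)\,d\omega=2\pi\alpha$ via one integration by parts and an arctangent, plus the symmetric-truncation bookkeeping showing the $R\ln R$ terms cancel). Your explicit treatment of the conditional convergence of the $\ln|(\omega-\beta)/(\omega-\delta)|$ tail is the one point that genuinely needs care, and you handle it correctly.
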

\begin{remark}
	The proof of~\hyperref[lem2]{Lemma~2} only requires some elementary techniques, such as integration by parts. Instead of weighting the left-hand side (LHS) of~\eqref{lem2_eq} by $1 / \omega^2$ and deriving another identity, we find that employing~\hyperref[lem2]{Lemma~2} to derive continuous-time CSBI can give  more insights into the interactions between the value of CSBI and the structure of the open-loop transfer function $L(s)$.
\end{remark}

\subsection{Discrete-Time System}
With a discrete-time plant model $G(z)$ and control mapping $C(z)$, the open-loop transfer function $L(z)$ can generally be expressed as
\begin{equation}\label{dis_tf}
	L(z) = G(z) \cdot C(z) = K \cdot \dfrac{\prod_{i=1}^{m}(z - z_i)}{\prod_{i=1}^{n}(z - p_i)},
\end{equation}
where $K \in \mathbb{R}$, relative degree is $\nu = n - m \geq 0$, $z_i$ and $p_i$ are respectively the zeros and poles, and $z_i \neq 0$. Compared with the open-loop transfer function for continuous-time system~\eqref{cont_tf}, since frequency transformation is not involved when deriving the discrete-time CSBI, unit delays are not explicitly expressed in~\eqref{dis_tf}, and we allow $p_i = 0$ in discrete-time system. The discrete-time complementary sensitivity function $T(z)$ is then defined as
\begin{equation*}
	T(z) = \dfrac{L(z)}{1 + L(z)}.
\end{equation*}
Since the frequency domain of discrete-time system is bounded, $\omega \in [-\pi, \pi]$, we consider the following type of CSBI without weighting function
\begin{equation}\label{eq7}
	\dfrac{1}{2\pi}\int_{-\pi}^{\pi} \log |T(z)| d\omega.
\end{equation}
Previous result on the discrete-time CSBI is claimed in the following lemma, which was also derived on the basis of Cauchy integral theorem~\cite{Sung_IJC_1989}.
\begin{lemma}\label{lem3}
	Let $z_{u_i}$'s be the strictly unstable zeros of open-loop transfer function $L(z)$. Then, assuming closed-loop stability, if $L(z)$ is a proper rational function, we have
	\begin{equation*}
		\dfrac{1}{2\pi}\int_{-\pi}^{\pi} \log \left|T(z)\right| d\omega =  \sum_{i} \log |z_{u_i}| + \log |K|,
	\end{equation*}
	where $z_{u_i}$ denotes the unstable zeros in $L(z)$, and $K$ is the leading coefficient of the numerator of $L(z)$, when the denominator is monic.
\end{lemma}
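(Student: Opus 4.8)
The plan is to reduce the integral to a sum of elementary ``building‑block'' integrals
\[
\mu(a) \;:=\; \frac{1}{2\pi}\int_{-\pi}^{\pi}\log\bigl|e^{j\omega}-a\bigr|\,d\omega ,
\]
one block for each root of the numerator and denominator of $T(z)$, and then to evaluate $\mu(a)$ directly — this is the discrete‑time counterpart of Lemma~2. First I would write $T(z)$ explicitly: substituting $L(z)=K\prod_{i=1}^{m}(z-z_i)/\prod_{i=1}^{n}(z-p_i)$ into $T=L/(1+L)$ gives
\[
T(z)=\frac{K\prod_{i=1}^{m}(z-z_i)}{\prod_{i=1}^{n}(z-p_i)+K\prod_{i=1}^{m}(z-z_i)}
     =\frac{K\prod_{i=1}^{m}(z-z_i)}{\prod_{i=1}^{n}(z-\tilde p_i)},
\]
where $\tilde p_1,\dots,\tilde p_n$ are the closed‑loop poles (roots of the characteristic polynomial, which is monic of degree $n$ when the relative degree $\nu=n-m\ge 1$, consistent with the normalization ``$K$ is the leading coefficient of the numerator when the denominator is monic''), and closed‑loop stability forces $|\tilde p_i|<1$ for every $i$. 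Taking moduli and logarithms, $\log|T(e^{j\omega})|=\log|K|+\sum_{i=1}^{m}\log|e^{j\omega}-z_i|-\sum_{i=1}^{n}\log|e^{j\omega}-\tilde p_i|$, so after integrating over $[-\pi,\pi]$ the whole statement reduces to knowing $\mu(a)$.

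Next I would prove the identity $\mu(a)=\log\max\{1,|a|\}$ for $|a|\neq 1$. Since $|e^{j\omega}|=1$ one has $\log|e^{j\omega}-a|=\log|1-ae^{-j\omega}|$. If $|a|<1$, the branch of $\ln(1-w)$ analytic at $w=0$ admits the expansion $\ln(1-ae^{-j\omega})=-\sum_{k\ge1}(a^{k}/k)e^{-jk\omega}$, which converges uniformly in $\omega$ by the Weierstrass $M$‑test ($|a|^{k}/k\le|a|^{k}$, $\sum|a|^{k}<\infty$); taking real parts gives $\ln|1-ae^{-j\omega}|$, and integrating term by term annihilates every harmonic $e^{-jk\omega}$ with $k\ge1$, so $\mu(a)=0$. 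If $|a|>1$, write $e^{j\omega}-a=-a\,(1-a^{-1}e^{j\omega})$, whence $\log|e^{j\omega}-a|=\log|a|+\log|1-a^{-1}e^{j\omega}|$; the second term integrates to zero by the previous case ($|a^{-1}|<1$), giving $\mu(a)=\log|a|$. (Equivalently, by rotational symmetry $\mu$ depends only on $|a|$, and one may differentiate under the integral sign with respect to $|a|$ and integrate the resulting elementary rational expression — the ``integration‑by‑parts'' flavoured route used for Lemma~2.) This single computation is the only analytic content of the proof.

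Finally I would assemble the pieces: applying $\mu$ to each numerator factor yields $\sum_{i=1}^{m}\mu(z_i)=\sum_{i:\,|z_i|>1}\log|z_i|=\sum_i\log|z_{u_i}|$ (minimum‑phase zeros contribute nothing), while $\sum_{i=1}^{n}\mu(\tilde p_i)=0$ since every closed‑loop pole lies strictly inside the unit circle, and the constant term contributes $\log|K|$; hence
\[
\frac{1}{2\pi}\int_{-\pi}^{\pi}\log\bigl|T(e^{j\omega})\bigr|\,d\omega=\log|K|+\sum_i\log|z_{u_i}|,
\]
which is the claim. Any spurious open‑loop pole–zero coincidences cancel in $T(z)$ and may be discarded at the outset.

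The main obstacle is establishing $\mu(a)=\log\max\{1,|a|\}$ by elementary means (the discrete analogue of Lemma~2) together with checking the borderline situation of a zero $z_i$ exactly on the unit circle, where I would pass to the limit via dominated convergence using that $\mu$ extends continuously through $|a|=1$ with value $0$; the remainder is algebraic bookkeeping on the factorization of $T(z)$ and on the degree/leading coefficient of the closed‑loop characteristic polynomial.
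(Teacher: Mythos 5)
Your proposal is correct, and it is worth being precise about what it matches. The paper itself does not prove Lemma~3; it cites it as a prior result of Sung \emph{et al.} obtained via the Cauchy integral theorem. Your route --- factoring $T(z)=K\prod_i(z-z_i)/\prod_i(z-\tilde p_i)$ with the closed-loop poles strictly inside the unit disk, and reducing everything to the building-block integral $\mu(a)=\frac{1}{2\pi}\int_{-\pi}^{\pi}\log|e^{j\omega}-a|\,d\omega=\log\max\{1,|a|\}$ --- is not the Cauchy-theorem route, but it is exactly the paper's own ``simplified approach'': your $\mu(a)$ identity is Lemma~4 (up to the factor of $2$ from squaring the modulus), and your assembly step is verbatim the proof of Theorem~3. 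What your write-up adds beyond the paper is a self-contained elementary proof of the $\mu(a)$ identity via the power series of $\ln(1-w)$ and term-by-term integration, which the paper only cites from Wu--Jonckheere; this is a clean and valid derivation (it is essentially Jensen's formula for a single linear factor). Two small caveats: your parenthetical restriction to $\nu=n-m\ge 1$ is the right one --- for biproper $L(z)$ the monic closed-loop denominator forces the constant $K/(1+K)$ rather than $K$, which is precisely the correction the paper records in Corollary~4 and notes was missed in the earlier literature, so Lemma~3 as literally stated should be read as the strictly proper case; and your limiting argument for a zero exactly on the unit circle needs slightly more care than dominated convergence alone (the integrand has a logarithmic singularity at $\omega$ with $e^{j\omega}=a$), though the integral is still absolutely convergent and equal to $0$ there, so the conclusion stands.
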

\noindent The fundamental tool for analyzing discrete-time CSBI is stated in the following lemma, whose proof only requires elementary techniques and is available in~\cite{Wu_TAC_1992}.
\begin{lemma}\label{lem4}
	For a complex number $a$, we have
	\begin{equation*}
		\int_{-\pi}^{\pi} \log \left| \mathrm{e}^{j\omega} - a\right|^2 d\omega =
		\begin{cases}
			0, \qquad \qquad \quad & \textrm{if }|a| \leq 1;\\
			2\pi \cdot \log |a|^2, & \textrm{if }|a| > 1.
		\end{cases}
	\end{equation*}
\end{lemma}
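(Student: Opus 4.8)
The plan is to reduce the integral to the elementary power-series expansion $\ln(1-w) = -\sum_{k\ge 1} w^k/k$ (valid for $|w|<1$) together with the orthogonality relations $\int_{-\pi}^{\pi}\mathrm{e}^{jk\omega}\,d\omega = 0$ for $k\neq 0$; since the base of the logarithm merely rescales the answer by $1/\ln 2$, it suffices to treat the natural logarithm. Because $|\mathrm{e}^{j\omega}| = 1$, I would first rewrite $|\mathrm{e}^{j\omega}-a|^2 = |1 - a\,\mathrm{e}^{-j\omega}|^2$, so that $\ln|\mathrm{e}^{j\omega}-a|^2 = 2\,\mathrm{Re}\,\ln(1 - a\,\mathrm{e}^{-j\omega})$ with the principal branch of the logarithm.

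For $|a|<1$, the argument $w = a\,\mathrm{e}^{-j\omega}$ stays in the open unit disk for every $\omega$, so $\ln(1-w) = -\sum_{k\ge 1}(a\,\mathrm{e}^{-j\omega})^k/k$ with convergence uniform in $\omega$, being dominated by $\sum_{k\ge 1}|a|^k/k<\infty$; integrating term by term and using $\int_{-\pi}^{\pi}\mathrm{e}^{-jk\omega}\,d\omega = 0$ for all $k\ge 1$ annihilates every term, so the integral is $0$. For $|a|>1$, I would factor $|\mathrm{e}^{j\omega}-a|^2 = |a|^2\,|1 - a^{-1}\mathrm{e}^{j\omega}|^2$, so that $\ln|\mathrm{e}^{j\omega}-a|^2 = \ln|a|^2 + \ln|1 - a^{-1}\mathrm{e}^{j\omega}|^2$; the second summand integrates to $0$ by the case just treated (the sign of $\omega$ in the exponent is irrelevant, $\omega\mapsto-\omega$ being a symmetry of $[-\pi,\pi]$), while the first is constant, yielding $2\pi\ln|a|^2$.

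The one place that needs care --- and which I expect to be the main obstacle --- is the boundary case $|a|=1$, where the integrand has a logarithmic singularity at the $\omega$ with $\mathrm{e}^{j\omega}=a$; it remains absolutely integrable, and one checks that the value of the integral depends continuously on $a$ across the unit circle (dominated convergence with a fixed integrable majorant for the negative part near the singular point), so letting $|a|\to 1$ from the inside gives $0$. Alternatively, with $a=\mathrm{e}^{j\theta}$ one has $|\mathrm{e}^{j\omega}-a|^2 = 4\sin^2\!\big((\omega-\theta)/2\big)$, and a shift of the integration variable reduces the claim to the classical identity $\int_{0}^{\pi/2}\ln\sin x\,dx = -\tfrac{\pi}{2}\ln 2$. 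The remaining bookkeeping --- justification of the principal branch and of term-by-term integration --- is routine. One could also avoid complex logarithms entirely by writing $a = r\mathrm{e}^{j\phi}$, differentiating $I(r) = \int_{-\pi}^{\pi}\ln(1 - 2r\cos\omega + r^2)\,d\omega$ under the integral sign, and evaluating the resulting rational integral via the standard value of $\int_{-\pi}^{\pi} d\omega/(1 - 2r\cos\omega + r^2)$; then the same subtleties reappear as the justification of differentiation under the integral and the treatment of $r = 1$.
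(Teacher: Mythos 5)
Your proof is correct and complete. Note that the paper does not actually prove this lemma --- it only asserts that the proof ``requires elementary techniques'' and defers to the cited reference [Wu\_TAC\_1992] --- so there is no in-paper argument to compare against; your series-expansion route (factoring out $\mathrm{e}^{j\omega}$ or $-a$ so the argument of $\ln(1-w)$ lies in the unit disk, expanding $\ln(1-w)=-\sum_{k\ge 1}w^k/k$ uniformly, and killing every term by $\int_{-\pi}^{\pi}\mathrm{e}^{-jk\omega}\,d\omega=0$) is the standard elementary derivation of this Jensen-type identity and is exactly in the spirit the paper alludes to. You also correctly flag and resolve the only delicate point, the boundary case $|a|=1$, where the integrand has an integrable logarithmic singularity; either of your two arguments (dominated convergence as $|a|\to 1^-$ with the majorant $|\ln(2\sin^2((\omega-\theta)/2))|$, or the direct reduction to $\int_0^{\pi/2}\ln\sin x\,dx=-\tfrac{\pi}{2}\ln 2$) closes it, and the base-$2$ versus natural logarithm issue is a harmless global rescaling as you note.
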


\section{Continuous-Time Complementary Sensitivity Bode Integral}\label{sec3}

We investigate the continuous-time CSBI in this section. The results are stated in two categories, namely when relative degree $\nu \geq 1$ and $\nu = 0$. Under each category, we show how the value of CSBI is related to the amount of pure integrators, as well as the leading coefficient of the open-loop transfer function. First, we consider a more general scenario when the open-loop transfer function is strictly proper, \textit{i.e.} $\nu \geq 1$.

\begin{theorem}\label{thm1}
	For an open-loop transfer function $L(s)$ with relative degree $\nu \geq 1$ and stable closed-loop system, the continuous-time CSBI satisfies
	\begin{equation}\label{thm1_eq}
			\dfrac{1}{2\pi}\int_{-\infty}^{\infty} \log |T(s)| \dfrac{d\omega}{\omega^2} =
			\begin{cases}
			\sum_{i} \mathrm{Re} \ z^{-1}_{u_i},   & \mathrm{if} \ 0 \leq l \leq n - 2; \\
			 \sum_{i} \mathrm{Re} \ z_{u_i}^{-1} - \dfrac{1}{2K} \cdot \dfrac{\prod_{i=1}^{n-1}(-p_i)}{\prod_{i=1}^{m}(-z_i)}, \hspace{16pt} & \mathrm{if} \ l=n-1; \\
			 \pm \infty, & \mathrm{otherwise},
			\end{cases}
	\end{equation}
	where $p_i$, $z_i$ and $z_{u_i}$ respectively denote the poles, zeros, and non-minimum phase zeros in $L(s)$, and $n-l$ is the amount of pure integrators in $L(s)$.
\end{theorem}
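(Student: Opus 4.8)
The plan is to carry $T$ through the frequency inversion~\eqref{freq_trans}, factor the result so that numerator and denominator have the same number of linear factors, and then apply \hyperref[lem2]{Lemma~2} once per factor pair. First I would write $L(s)=N(s)/D(s)$ with $N(s)=K\prod_{i=1}^{m}(s-z_i)$ of degree $m$ and $D(s)=s^{\,n-l}\prod_{i=1}^{l}(s-p_i)$ monic of degree $n$, as in~\eqref{cont_tf}. Since the relative degree $\nu=n-m\ge 1$, the closed-loop characteristic polynomial $N(s)+D(s)$ is again monic of degree $n$; writing its roots as $q_1,\dots,q_n$ (all with $\mathrm{Re}\,q_i<0$, by closed-loop stability) gives $T(s)=K\prod_{i=1}^{m}(s-z_i)\big/\prod_{i=1}^{n}(s-q_i)$. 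Substituting $s=\tilde s^{-1}$ and clearing powers of $\tilde s$ (legitimate because $z_i\neq 0$ and $q_i\neq 0$) yields
\begin{equation*}
	\tilde T(\tilde s)=T(0)\cdot\tilde s^{\,\nu}\cdot\frac{\prod_{i=1}^{m}\big(\tilde s-z_i^{-1}\big)}{\prod_{i=1}^{n}\big(\tilde s-q_i^{-1}\big)},\qquad T(0)=K\,\frac{\prod_{i=1}^{m}(-z_i)}{\prod_{i=1}^{n}(-q_i)} .
\end{equation*}
The structural point is that the $\nu=n-m$ extra factors $\tilde s=(\tilde s-0)$ bring the number of numerator factors up to exactly $n$, so every $q_i^{-1}$ can be paired with some $z_i^{-1}$ or with a $0$, and $\ln|\tilde T(j\tilde\omega)|^{2}$ is $\ln|T(0)|^{2}$ plus a finite sum of terms $\ln\big|(j\tilde\omega-a)/(j\tilde\omega-b)\big|^{2}$.

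Next I would use the change of variables~\eqref{freq_trans}, under which~\eqref{Cont_Bode} becomes $\tfrac{1}{4\pi}\int_{-\infty}^{\infty}\ln|\tilde T(j\tilde\omega)|^{2}\,d\tilde\omega$, substitute the factored form, and integrate term by term with \hyperref[lem2]{Lemma~2}; the $\nu$ factors $\tilde s-0$ contribute nothing since $|\mathrm{Re}\,0|=0$. The outcome is
\begin{equation*}
	\frac{1}{2\pi}\int_{-\infty}^{\infty}\ln|T(s)|\,\frac{d\omega}{\omega^{2}}
	=\frac{\ln|T(0)|^{2}}{4\pi}\int_{-\infty}^{\infty} d\tilde\omega
	\;+\;\frac12\left(\sum_{i=1}^{m}\big|\mathrm{Re}\,z_i^{-1}\big|-\sum_{i=1}^{n}\big|\mathrm{Re}\,q_i^{-1}\big|\right).
\end{equation*}
This is where the three branches of~\eqref{thm1_eq} appear. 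If $L$ has at least one pure integrator ($l\le n-1$), then $D(0)=0$, hence $T(0)=N(0)/N(0)=1$ and the first term vanishes. If $L$ has no pure integrator ($l=n$, the ``otherwise'' branch), then $T(0)=L(0)/(1+L(0))$ is real, finite, and nonzero, with $|T(0)|=1$ only in the non-generic situation $L(0)=-1/2$; hence the first term is $+\infty$ or $-\infty$ according as $|T(0)|>1$ or $|T(0)|<1$ (equivalently, in the original variable $\ln|T(j\omega)|/\omega^{2}\sim\ln|T(0)|/\omega^{2}$ near $\omega=0$, which is not integrable).

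It then remains, for $l\le n-1$, to evaluate the second term. Since $\mathrm{Re}\,q_i^{-1}=\mathrm{Re}\,q_i/|q_i|^{2}<0$ we have $|\mathrm{Re}\,q_i^{-1}|=-\mathrm{Re}\,q_i^{-1}$, and splitting the zeros into the non-minimum-phase ones $z_{u_i}$ (for which $|\mathrm{Re}\,z_{u_i}^{-1}|=\mathrm{Re}\,z_{u_i}^{-1}$) and the remaining ones (those with negative real part, and any imaginary-axis zeros, which contribute $0$), the second term rearranges into
\begin{equation*}
	\sum_i\mathrm{Re}\,z_{u_i}^{-1}\;+\;\frac12\left(\sum_{i=1}^{n}\mathrm{Re}\,q_i^{-1}-\sum_{i=1}^{m}\mathrm{Re}\,z_i^{-1}\right).
\end{equation*}
I would compute the last bracket from the logarithmic derivative $T'(0)/T(0)=\sum_i q_i^{-1}-\sum_i z_i^{-1}$ (both sums real, since the two root sets are closed under conjugation) together with the order of vanishing of $T-1=-D/(N+D)$ at the origin. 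When $n-l\ge 2$, one has $D(0)=D'(0)=0$, so $T-1=O(s^{2})$, $T'(0)=0$, the bracket vanishes, and the first line of~\eqref{thm1_eq} follows. When $n-l=1$, one has $T-1\sim-s\,\prod_{i=1}^{n-1}(-p_i)/N(0)$ with $N(0)=K\prod_{i=1}^{m}(-z_i)$, so $T'(0)/T(0)=-\prod_{i=1}^{n-1}(-p_i)\big/\big(K\prod_{i=1}^{m}(-z_i)\big)$, which produces the $-\tfrac{1}{2K}\prod_{i=1}^{n-1}(-p_i)/\prod_{i=1}^{m}(-z_i)$ correction of the second line. I expect the main obstacle to be exactly this final step: one must track precisely how many factors of $s$ the closed-loop denominator $N+D$ inherits from $D$ near the origin (two or more, exactly one, or none), as that count is what distinguishes the three branches; everything else is elementary manipulation of symmetric functions of the roots plus a single invocation of \hyperref[lem2]{Lemma~2}.
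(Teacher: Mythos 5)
Your proposal is correct and follows essentially the same route as the paper's proof: frequency inversion via~\eqref{freq_trans}, factoring $T$ over the closed-loop poles so that \hyperref[lem2]{Lemma~2} applies factor by factor, and a case split on the number of pure integrators, with the ``otherwise'' branch diverging because $|T(0)|\neq 1$ generically when $l=n$. The only difference is cosmetic but pleasant: where the paper extracts the identities $\sum_i r_i^{-1}-\sum_i z_i^{-1}$ by equating constant and $\tildeس^{n-1}$-coefficients in the expanded denominators \eqref{C_Den}--\eqref{C_Den2}, you obtain the same relations more compactly from $T(0)=N(0)/(N(0)+D(0))$ and the logarithmic derivative $T'(0)/T(0)$, which also makes the connection to the $\tfrac{1}{2T(0)}\lim_{s\to 0}\tfrac{dT}{ds}$ term of \hyperref[lem1]{Lemma~1} transparent.
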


\begin{proof}[\textbf{Proof.}]
	When relative degree $\nu \geq 1$, the open-loop transfer function defined in~\eqref{cont_tf} can be expressed as follows
	\begin{equation}\label{cont_tf2}
		L(s) = K \cdot \dfrac{\prod_{i=1}^{m}(s - z_i)}{s^{n-l} \cdot \prod_{i=1}^{l}(s - p_i)},
	\end{equation}
	where $l \leq n$, $m + 1 \leq n$, $K \in \mathbb{R}$, and no $z_i$ or $p_i$ is at $s = 0$. Substitute~\eqref{cont_tf2} into~\eqref{complementary}, and rewrite the complementary sensitivity function $T(s)$ in the following two equivalent forms
	\begin{align}
		T_1(s) & = T(s) = \dfrac{ K \cdot \prod_{i=1}^{m}(s - z_i) }{s^{n-l} \cdot \prod_{i=1}^{l}(s - p_i) + {K \cdot \prod_{i=1}^{m}(s - z_i) }}, \label{C_T1}\\	
		T_2(s) & = T(s) = K \cdot \dfrac{\prod_{i=1}^{m}(s - z_i) }{ \prod_{i=1}^n (s - r_i) }, \label{C_T2}
	\end{align}
	where $r_i$'s denote the closed-loop poles with negative real parts. Applying frequency transformation~\eqref{freq_trans} to complementary sensitivity functions~\eqref{C_T1} and~\eqref{C_T2} gives
	\begin{align}
		\tilde{T}_1(\tilde{s}) & = T_1(s) = \dfrac{ K \cdot \prod_{i=1}^{m}(\tilde{s}^{-1} - z_i) }{\tilde{s}^{l-n} \cdot \prod_{i=1}^{l}(\tilde{s}^{-1} - p_i)+ {K \cdot \prod_{i=1}^{m}(\tilde{s}^{-1} - z_i) }}, \label{C1_T1}\\
		\tilde{T}_2(\tilde{s}) & = T_2(s) = K \cdot  \dfrac{ \prod_{i=1}^{m}(\tilde{s}^{-1} - z_i) }{ \prod_{i=1}^n (\tilde{s}^{-1} - r_i) }. \label{C1_T2}
	\end{align}
	Multiplying the numerators and denominators of~\eqref{C1_T1} and~\eqref{C1_T2} by $\tilde{s}^n$ and with some algebraic manipulations, we have
	\begin{align}
		\tilde{T}_1(\tilde{s}) & =   \dfrac{K \cdot \tilde{s}^{n-m} \prod_{i=1}^{m}(1 - z_i \cdot \tilde{s})  }{\prod_{i=1}^{l}(1 - p_i \cdot \tilde{s}) + {K \cdot \tilde{s}^{n-m}  \prod_{i=1}^{m}(1 - z_i \cdot \tilde{s})}} \label{C2_T1}\\
		& =  \dfrac{ K \prod_{i=1}^{m}(-z_i) \cdot \tilde{s}^{n-m} \prod_{i=1}^{m}(\tilde{s} - z_i^{-1})  }{ \prod_{i=1}^l\left(-p_i\right) \cdot  \prod_{i=1}^{l}(\tilde{s} - p_i^{-1}) + {K \prod_{i=1}^{m} \left(-z_i\right) \cdot \tilde{s}^{n-m} \prod_{i=1}^{m}(\tilde{s} - z_i^{-1})  }}, \nonumber \\
		\tilde{T}_2(\tilde{s}) & = K \cdot \dfrac{ \tilde{s}^{n-m} \prod_{i=1}^{m}(1 - z_i \cdot \tilde{s})  }{ \prod_{i=1}^n (1 - r_i \cdot \tilde{s}) } = K \cdot \dfrac{\prod_{i=1}^{m}\left(-z_i\right)}{\prod_{i=1}^{n}\left(-r_i\right)} \cdot \dfrac{\tilde{s}^{n-m} \prod_{i=1}^{m}(\tilde{s} - z_i^{-1})}{\prod_{i=1}^n (\tilde{s} - r_i^{-1})}. \label{C2_T2}
	\end{align}
	 Some relationship among $p_i$, $z_i$, and $r_i$ can be implied from~\eqref{C_T1}-\eqref{C2_T2}. Equating the denominators of~\eqref{C_T1} and~\eqref{C_T2}, we have
	 \begin{equation}\label{expand1}
	 	s^{n-l} \cdot \prod_{i=1}^{l}(s - p_i) + {K \cdot \prod_{i=1}^{m}(s - z_i) } = \prod_{i=1}^n (s - r_i).
	 \end{equation}
	 Expanding both sides of~\eqref{expand1} yields
	\begin{equation}\label{C_Den}
		\begin{split}
			s^n - \left( \sum_{i=1}^{l}p_i \right) s^{n-1} + \cdots + \prod_{i=1}^{l}\left(-p_i\right)  s^{n-l} & + K \cdot \left[s^m -\left( \sum_{i=1}^{m} z_i \right) \cdot s^{m-1} + \cdots + \prod_{i=1}^{m}\left(-z_i\right)\right] \\
			&= s^n - \left(\sum_{i=1}^{n} r_i\right)s^{n - 1} + \cdots + \prod_{i=1}^{n}\left(-r_i\right).
		\end{split}
	\end{equation}
	Since complex roots always come in conjugate pairs, the products $\prod_{i=1}^{l}(-p_i)$, $\prod_{i=1}^{m}(-z_i)$ and $\prod_{i=1}^{n}\left(-r_i\right)$ in~\eqref{C_Den} are all real. Equating the denominators in~\eqref{C2_T1} and~\eqref{C2_T2} gives
	\begin{equation}\label{expand2}
		\prod_{i=1}^{l}(-p_i) \prod_{i=1}^{l}(\tilde{s} - p_i^{-1}) + K \prod_{i=1}^{m}(-z_i) \cdot \tilde{s}^{n-m}  \prod_{i=1}^{m}(\tilde{s} - z_i^{-1}) = \prod_{i=1}^{n}(-r_i) \cdot \prod_{i=1}^{n}(\tilde{s} - r_i^{-1}).
	\end{equation}
	Expanding both sides of~\eqref{expand2} yields
	\begin{equation}\label{C_Den2}
		\begin{split}
			\prod_{i=1}^{l}(-p_i) \left[ \tilde{s}^l - \left( \sum_{i=1}^{l} p_i^{-1} \right) \tilde{s}^{l-1} + \cdots + \prod_{i=1}^{l}\left( -p_i^{-1} \right) \right] + K \prod_{i=1}^{m}(-z_i) \Bigg[ \tilde{s}^{n} - \left( \sum_{i=1}^{m} z_i^{-1} \right)\tilde{s}^{n-1} + \cdots\\
			+ \prod_{i=1}^{m}\left(-z_i^{-1}\right)\tilde{s}^{n-m} \Bigg] = \prod_{i=1}^{n}\left(-r_i\right) \left[ \tilde{s}^n - \left( \sum_{i=1}^{n} r_i^{-1} \right) \tilde{s}^{n-1} + \cdots + \prod_{i=1}^{n}\left(-r_i^{-1}\right)  \right].
		\end{split}
	\end{equation}
	The value of CSBI varies depending on the amount of pure integrators in the open-loop transfer function $L(s)$.
	
	\vspace{0.5em}	
	
	\noindent \textit{Case 1}: No pure integrator exists in $L(s)$, \textit{i.e.} $l = n$.
		
		When $l = n$, equating the constant terms in~\eqref{C_Den} gives the following identity
		\begin{equation}\label{coef1_case1}
			\prod_{i=1}^{n}\left(-p_i\right) + K  \prod_{i=1}^{m}  \left(-z_i\right) = \prod_{i=1}^{n}\left(-r_i\right).
		\end{equation}
		Equating the coefficients of terms $\tilde{s}^{n-1}$ in~\eqref{C_Den2} yields
		\begin{equation}\label{coef2_case1}
			\prod_{i=1}^{n}(-p_i) \left(\sum_{i=1}^{n} p_i^{-1} \right)  + K \prod_{i=1}^{m}(-z_i)  \left( \sum_{i=1}^{m}z_i^{-1} \right)  = \prod_{i=1}^{n}\left(-r_i\right) \left( \sum_{i=1}^{n} r_i^{-1} \right).
		\end{equation}
		Applying~\hyperref[lem2]{Lemma~2} to $\tilde{T}_2(\tilde{s})$ in~\eqref{C2_T2} and substituting~\eqref{coef1_case1} into the result, the CSBI satisfies
	\begin{equation}\label{Cont_Bode2}
		\begin{split}
			&\dfrac{1}{2\pi}\int_{-\infty}^{\infty} |T(s)| \ \dfrac{d\omega}{\omega^2}  = 	\dfrac{1}{2\pi} \int_{-\infty}^{\infty} \ln |\tilde{T}_2\left(\tilde{s}\right)| d\tilde{\omega}\\
			& \hspace{4em} = \dfrac{1}{2} \cdot \left[ \sum_{i=1}^{m} \left| \textrm{Re} \ z_i^{-1}\right| - \sum_{i=1}^{n} \left| \textrm{Re} \ r_i^{-1} \right| \right]    + \dfrac{1}{\pi} \int_{-\infty}^{\infty} \ln \left|K \cdot \dfrac{\prod_{i=1}^{m}\left(-z_i\right)}{\prod_{i=1}^{n}\left(-r_i\right)}\right| d\tilde{\omega}\\
			& \hspace{4em} = \dfrac{1}{2} \cdot \left[ \sum_{i=1}^{m} \left| \textrm{Re} \ z_i^{-1}\right| - \sum_{i=1}^{n} \left| \textrm{Re} \ r_i^{-1} \right| \right]   + \dfrac{1}{\pi} \int_{-\infty}^{\infty} \ln \dfrac{\left|K \prod_{i=1}^{m}\left(-z_i\right)\right|}{\left|\prod_{i=1}^{n}\left(-p_i\right) + K \prod_{i=1}^{m}  \left(-z_i\right)\right| } d\tilde{\omega}.
		\end{split}
	\end{equation}	
	Since the first two terms on the right-hand side (RHS) of~\eqref{Cont_Bode2} are bounded, the CSBI is bounded if and only if $\left|K \prod_{i=1}^{m}\left(-z_i\right)\right| = \left|\prod_{i=1}^{n}\left(-p_i\right) + K \prod_{i=1}^{m}  \left(-z_i\right)\right|$. This condition can be attained if: i) at least one $p_i = 0$, which contradicts the previous assumption that $p_i \neq 0$, and hence CSBI is undefined for this case; or ii) $\prod_{i=1}^{n}\left(-p_i\right) \allowbreak= -2K \prod_{i=1}^{m}  \left(-z_i\right)$, which was not stated in the previous results and is also omitted in the claim of~\hyperref[thm1]{Theorem~1}, since this condition is trivial and can rarely be satisfied in practice. With~\eqref{coef2_case1} and some further analysis, the CSBI in condition ii) satisfies $(2\pi)^{-1}\int_{-\infty}^{\infty} |T(s)| / \omega^2 \ d\omega = \sum_{i=1}^{n} \mathrm{Re} \ p_i^{-1} - \sum_{i} \mathrm{Re} \ z_{s_i}^{-1}$, where $z_{s_i}$ denotes the minimum phase zeros of $L(s)$. For most open-loop transfer functions without pure integrator, the corresponding CSBIs are unbounded. Specifically, when $| \prod_{i=1}^{n}\left(-p_i\right) + K \cdot \prod_{i=1}^{m}  \left(-z_i\right)  | < | K \cdot \prod_{i=1}^{m}\left(-z_i\right) |$, the CSBI is negative infinity; while when $| \prod_{i=1}^{n}\left(-p_i\right) + K \cdot \prod_{i=1}^{m}  \left(-z_i\right)  |  >  | K \cdot \prod_{i=1}^{m}\left(-z_i\right) |$, this integral is positive infinity.
	
	\vspace{0.5em}

	\noindent \textit{Case 2}: Single pure integrator exists in $L(s)$, \textit{i.e.} $l = n-1$.
	
	When $l = n-1$, equating the constant terms in~\eqref{C_Den} yields
	\begin{equation}\label{coef1_case2}
		K  \prod_{i=1}^{m}  \left(-z_i\right) = \prod_{i=1}^{n}\left(-r_i\right).
	\end{equation}
	Equating the coefficients of terms $\tilde{s}^{n-1}$ in~\eqref{C_Den2} gives
	\begin{equation}\label{coef2_case2}
		- \prod_{i=1}^{n-1}(-p_i) + K \prod_{i=1}^{m}(-z_i)  \left( \sum_{i=1}^{m}z_i^{-1} \right) = \prod_{i=1}^{n}\left(-r_i\right) \left( \sum_{i=1}^{n} r_i^{-1} \right).
	\end{equation}
	Applying~\hyperref[lem2]{Lemma~2} to $\tilde{T}_2(\tilde{s})$ in~\eqref{C2_T2} and substituting~\eqref{coef1_case2} into the result, the CSBI  becomes
	\begin{equation}\label{Cont_Bode3}
		\begin{split}
			\dfrac{1}{2\pi}\int_{-\infty}^{\infty} |T(s)| \ \dfrac{d\omega}{\omega^2} & = 	\dfrac{1}{2\pi} \int_{-\infty}^{\infty} \ln |\tilde{T}_2\left(\tilde{s}\right)| d\tilde{\omega}\\
			& = \dfrac{1}{2} \cdot \left[ \sum_{i=1}^{m} \left| \textrm{Re} \ z_i^{-1}\right| - \sum_{i=1}^{n} \left| \textrm{Re} \ r_i^{-1} \right| \right]    + \dfrac{1}{\pi} \int_{-\infty}^{\infty} \ln \left|K \cdot 		\dfrac{\prod_{i=1}^{m}\left(-z_i\right)}{\prod_{i=1}^{n}\left(-r_i\right)}\right| d\tilde{\omega}\\
			& = \dfrac{1}{2} \cdot \left[ \sum_{i=1}^{m} \left| \textrm{Re} \ z_i^{-1}\right| - \sum_{i=1}^{n} \left| \textrm{Re} \ r_i^{-1} \right| \right].
		\end{split}
	\end{equation}	
	The first term on the RHS of~\eqref{Cont_Bode3} can be decomposed as
	\begin{equation}\label{res_1}
		\sum_{i=1}^{m} \left|\textrm{Re} \  z_i^{-1} \right| = \sum_{i} \textrm{Re} \ z^{-1}_{u_i} - \sum_{i} \textrm{Re} \ z_{s_i}^{-1},
	\end{equation}
	where $z_{u_i}$ and $z_{s_i}$ respectively denote the non-minimum phase zeros and minimum phase zeros in $L(s)$. Since the closed-loop system is stable and all the closed-loop poles $r_i$'s have negative real parts, with identities~\eqref{coef1_case2} and~\eqref{coef2_case2}, the second term on the RHS of~\eqref{Cont_Bode3} can be rewritten as
	\begin{equation}\label{eq27}
		\begin{split}
			\sum_{i=1}^{n} \left|\textrm{Re} \ r_i^{-1} \right| & = - \left[ \prod_{i=1}^{n}\left(-r_i\right) \right]^{-1} \cdot  \left[ -\prod_{i=1}^{n-1}(-p_i) + K\prod_{i=1}^{m}(-z_i) \sum_{i=1}^{m} \textrm{Re} \ z_i^{-1}\right]\\
			& = - \dfrac{ -\prod_{i=1}^{n-1}(-p_i) + K\prod_{i=1}^{m}(-z_i) \sum_{i=1}^{m} \textrm{Re} \ z_i^{-1} }{K  \prod_{i=1}^{m}  \left(-z_i\right)}\\
			& = - \sum_{i} \mathrm{Re} \ z_{s_i}^{-1} - \sum_{i} \mathrm{Re} \ z_{u_i}^{-1} + \dfrac{1}{K} \cdot \dfrac{\prod_{i=1}^{n-1}(-p_i)}{\prod_{i=1}^{m}(-z_i)}.
		\end{split}
	\end{equation}
	Combing the results in~\eqref{Cont_Bode3}, \eqref{res_1} and~\eqref{eq27}, when single integrator exists in $L(s)$, the CSBI is
	\begin{equation}\label{eq277}
		\dfrac{1}{2\pi}\int_{-\infty}^{\infty} |T(s)| \ \dfrac{d\omega}{\omega^2} =  \sum_{i} \mathrm{Re} \ z_{u_i}^{-1} - \dfrac{1}{2K} \cdot \dfrac{\prod_{i=1}^{n-1}(-p_i)}{\prod_{i=1}^{m}(-z_i)}.
	\end{equation}
	
	\vspace{0.5em}
	
	\noindent \textit{Case 3}: Two or more pure integrators exist in $L(s)$, \textit{i.e.} $0 \leq l \leq n - 2$.
	
	When $0 \leq l \leq n-2$, equating the constant terms in~\eqref{C_Den} gives the same identity as~\eqref{coef1_case2}, $K \cdot \prod_{i=1}^{m}  (-z_i) = \prod_{i=1}^{n}(-r_i)$. Equating the coefficients of terms $\tilde{s}^{n-1}$ in~\eqref{C_Den2} yields
	\begin{equation}\label{coef2_case3}
		K \prod_{i=1}^{m}(-z_i)  \left( \sum_{i=1}^{m}z_i^{-1} \right) = \prod_{i=1}^{n}\left(-r_i\right) \left( \sum_{i=1}^{n} r_i^{-1} \right).
	\end{equation}
	Since all the closed-loop poles $r_i$'s have negative real parts, with~\eqref{coef1_case2} and~\eqref{coef2_case3}, we have
	\begin{equation}\label{eq30}
			\sum_{i=1}^{n} \left|\textrm{Re} \ r_i^{-1} \right|  = - \left[ \prod_{i=1}^{n}\left(-r_i\right) \right]^{-1} \cdot  \left[ K\prod_{i=1}^{m}(-z_i) \sum_{i=1}^{m} \textrm{Re} \ z_i^{-1}\right] = - \sum_{i} \textrm{Re} \ z^{-1}_{u_i} - \sum_{i} \textrm{Re} \ z_{s_i}^{-1}.
	\end{equation}
	Applying~\hyperref[lem2]{Lemma~2} to $\tilde{T}_2(\tilde{s})$ in~\eqref{C2_T2} and substituting~\eqref{coef1_case2} and~\eqref{eq30} into the result, the CSBI for $L(s)$ with two or more integrators is
	\begin{equation}\label{Cont_Bode4}
		\begin{split}
			\dfrac{1}{2\pi}\int_{-\infty}^{\infty} |T(s)| \ \dfrac{d\omega}{\omega^2} & = 	\dfrac{1}{2\pi} \int_{-\infty}^{\infty} \ln |\tilde{T}_2\left(\tilde{s}\right)| d\tilde{\omega}\\
			& = \dfrac{1}{2} \cdot \left[ \sum_{i=1}^{m} \left| \textrm{Re} \ z_i^{-1}\right| - \sum_{i=1}^{n} \left| \textrm{Re} \ r_i^{-1} \right| \right]    + \dfrac{1}{\pi} \int_{-\infty}^{\infty} \ln \left|K \cdot 		\dfrac{\prod_{i=1}^{m}\left(-z_i\right)}{\prod_{i=1}^{n}\left(-r_i\right)}\right| d\tilde{\omega}\\
			& =\sum_{i} \textrm{Re} \ z^{-1}_{u_i}.
		\end{split}
	\end{equation}
	Summarizing the results in~\eqref{Cont_Bode2}, \eqref{eq277} and~\eqref{Cont_Bode4} leads to~\eqref{thm1_eq} in~\hyperref[thm1]{Theorem~1}. This completes the proof.
\end{proof}

	Next, we consider the scenario when the open-loop transfer function $L(s)$ is biproper, \textit{i.e.} relative degree $\nu = n-m = 0$. The CSBI of this category is related to not only the number of pure integrators $n-l$, but also the value of leading coefficient $K$.
	
	\begin{corollary}\label{cor2}
	For an open-loop transfer function $L(s)$ with relative degree $\nu = 0$ and stable closed-loop system, the continuous-time CSBI satisfies
		\begin{equation*}
			\dfrac{1}{2\pi}\int_{-\infty}^{\infty} \log |T(s)| \dfrac{d\omega}{\omega^2} =
			\begin{cases}
			\sum_{i} \mathrm{Re} \ z^{-1}_{u_i},   & \mathrm{if} \ K \neq -1 \ \mathrm{ and } \ 0 \leq l \leq n - 2; \\
			 \sum_{i} \mathrm{Re} \ z_{u_i}^{-1} - \dfrac{1}{2K} \cdot \dfrac{\prod_{i=1}^{n-1}(-p_i)}{\prod_{i=1}^{n}(-z_i)}, \hspace{16pt} & \mathrm{if} \ K \neq -1 \ \mathrm{ and }  \ l=n-1; \\
			\pm \infty, & \mathrm{otherwise}.
			\end{cases}
		\end{equation*}
	\end{corollary}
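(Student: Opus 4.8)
The plan is to rerun the proof of \hyperref[thm1]{Theorem~1} with the one structural change forced by $\nu=0$, namely $m=n$: the closed-loop characteristic polynomial $D(s)=s^{n-l}\prod_{i=1}^{n}(s-p_i)+K\prod_{i=1}^{n}(s-z_i)$ now has degree $n$ with \emph{leading coefficient} $1+K$ instead of $1$. First I would isolate the degenerate case $K=-1$: the two $s^{n}$-terms of $D(s)$ then cancel, $\deg D<n$, and $T(s)=L(s)/\bigl(1+L(s)\bigr)$ becomes improper (equivalently $T(\infty)=K/(1+K)$ is unbounded), so no stable closed loop with $n$ finite poles is available and this case is filed under ``otherwise''; the case $l=n$ will be filed there as well. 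For $K\neq-1$ I would set $D(s)=(1+K)\prod_{i=1}^{n}(s-r_i)$ with $r_i$ the closed-loop poles, all in the open left half plane by closed-loop stability, so that after the frequency inversion $s=\tilde s^{-1}$ of~\eqref{freq_trans} and clearing $\tilde s^{n}$ one obtains $\tilde T_2(\tilde s)$ in the same shape as~\eqref{C2_T2}, with $m=n$ (hence no $\tilde s^{\,n-m}$ factor) and the extra constant factor $1/(1+K)$:
\[
\tilde T_2(\tilde s)=\frac{K\prod_{i=1}^{n}(-z_i)}{(1+K)\prod_{i=1}^{n}(-r_i)}\cdot\frac{\prod_{i=1}^{n}(\tilde s-z_i^{-1})}{\prod_{i=1}^{n}(\tilde s-r_i^{-1})}.
\]

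Next I would extract two coefficient identities from $D(s)=(1+K)\prod_{i=1}^{n}(s-r_i)$. Equating constant terms gives $K\prod(-z_i)=(1+K)\prod(-r_i)$ whenever $l\le n-1$ (the term $s^{n-l}\prod(s-p_i)$ then has no constant term), and $\prod(-p_i)+K\prod(-z_i)=(1+K)\prod(-r_i)$ when $l=n$. Equating the coefficients of $\tilde s^{\,n-1}$ in the transformed identity (the analogue of~\eqref{C_Den2}) gives $\prod_{i=1}^{n-1}(-p_i)-K\prod(-z_i)\sum z_i^{-1}=-(1+K)\prod(-r_i)\sum r_i^{-1}$ when $l=n-1$, and the same relation with the $\prod_{i=1}^{n-1}(-p_i)$ term dropped when $l\le n-2$. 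As in \hyperref[thm1]{Theorem~1}, conjugate symmetry of the roots makes $\prod(-p_i)$, $\prod(-z_i)$, $\prod(-r_i)$, $\sum z_i^{-1}$, $\sum r_i^{-1}$ all real.

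Then I would apply \hyperref[lem2]{Lemma~2} to $\ln|\tilde T_2(\tilde s)|$, pairing the $n$ zeros $z_i^{-1}$ with the $n$ poles $r_i^{-1}$. The leftover term $\tfrac{1}{\pi}\int_{-\infty}^{\infty}\ln\bigl|K\prod(-z_i)/((1+K)\prod(-r_i))\bigr|\,d\tilde\omega$ vanishes when $l\le n-1$, because the constant-term identity forces the ratio inside to be exactly $1$; for $l=n$ it equals $\tfrac{1}{\pi}\int\ln\bigl|K\prod(-z_i)/(\prod(-p_i)+K\prod(-z_i))\bigr|\,d\tilde\omega$, whose integrand is a nonzero constant outside the trivial exceptional subcases already discussed in Case~1 of the proof of \hyperref[thm1]{Theorem~1}, so this branch equals $\pm\infty$ --- which, together with $K=-1$, gives the ``otherwise'' line. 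In the surviving regime $K\neq-1$, $l\le n-1$, the CSBI reduces to $\tfrac{1}{2}\bigl[\sum|\mathrm{Re}\,z_i^{-1}|-\sum|\mathrm{Re}\,r_i^{-1}|\bigr]$; splitting $\sum|\mathrm{Re}\,z_i^{-1}|=\sum\mathrm{Re}\,z_{u_i}^{-1}-\sum\mathrm{Re}\,z_{s_i}^{-1}$ into non-minimum-phase and minimum-phase zeros, using $\sum|\mathrm{Re}\,r_i^{-1}|=-\sum r_i^{-1}$ since every $r_i$ is in the open left half plane, and substituting the $\tilde s^{\,n-1}$ identity for $\sum r_i^{-1}$, the minimum-phase contributions cancel and one is left with $\sum\mathrm{Re}\,z_{u_i}^{-1}$ when $l\le n-2$ and with $\sum\mathrm{Re}\,z_{u_i}^{-1}-\tfrac{1}{2K}\,\prod_{i=1}^{n-1}(-p_i)/\prod_{i=1}^{n}(-z_i)$ when $l=n-1$, which is the asserted formula.

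The step I expect to be the main obstacle is the ``otherwise'' branch, which unlike the two exact identities rests on a genuine divergence/degeneracy argument: one must confirm that $|K\prod(-z_i)/((1+K)\prod(-r_i))|\neq1$ for generic $l=n$ data --- reusing, including its exceptional subcase, the discussion from Case~1 of \hyperref[thm1]{Theorem~1} --- and that $K=-1$ renders $T$ improper so that the standing hypotheses break down. Everything else is bookkeeping parallel to \hyperref[thm1]{Theorem~1}; the single new ingredient is the $1+K$ leading coefficient, which cancels cleanly against the constant-term identity $K\prod(-z_i)=(1+K)\prod(-r_i)$ precisely when at least one pure integrator is present.
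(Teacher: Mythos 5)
Your proposal follows the paper's proof essentially step for step: the same split into $K=-1$ versus $K\neq-1$, the same frequency inversion with the $(1+K)$ leading coefficient, the same constant-term and $\tilde s^{\,n-1}$-coefficient identities extracted from the matched denominators, and the same application of \hyperref[lem2]{Lemma~2} followed by the minimum-phase/non-minimum-phase cancellation. The one place you deviate is the $K=-1$ branch, which you dismiss as a degenerate configuration with no proper closed loop; since the corollary asserts the \emph{value} $\pm\infty$ there, the paper instead pushes the same Lemma~2 computation through (its~\eqref{eq33}--\eqref{Cont_Bode5}) and observes that the leftover constant-logarithm term diverges because the lumped gain is generically not of unit modulus --- a small repair you should make, but not a change of approach.
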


\begin{proof}[\textbf{Proof}]
	When relative degree $\nu = 0$, the open-loop transfer function $L(s)$ can be expressed as follows
	\begin{equation}\label{eq32}
		L(s) = K \cdot \dfrac{\prod_{i=1}^{n}(s - z_i)}{s^{n-l} \prod_{i=1}^{l}(s - p_i)},
	\end{equation}
	where $K \in \mathbb{R}$,  $l \leq n$, and no zero $z_i$ or pole $p_i$ is at $s = 0$. When $K = -1$, since the term $s^n$  vanishes in the denominator of complementary sensitivity function $T(s)$ in~\eqref{complementary}, applying the frequency transformation~\eqref{freq_trans} and similar coefficient manipulations as in~\eqref{C2_T2}-\eqref{eq32}, the complementary sensitivity function $T(s)$ satisfies
	\begin{equation}\label{eq33}
	T(s) = \dfrac{\prod_{i=1}^{n}(s - z_i)}{K' \cdot \prod_{i=1}^{q}(s-r_i)} = \dfrac{\prod_{i=1}^{n}(-z_i)}{K' \prod_{i=1}^{q}(-r_i)} \cdot \dfrac{\prod_{i=1}^{n}\left(\tilde{s} - z_i^{-1}\right)}{\tilde{s}^{n-q} \cdot \prod_{i=1}^{q}\left( \tilde{s} - r_i^{-1} \right)} = \tilde{T}(\tilde{s}),
	\end{equation}
	where $r_i$'s are the closed-loop poles with negative real parts, $K' \in \mathbb{R}$ is the lumped coefficient, $q < n$, and the values of $K'$ and $q$ are determined by the distributions of poles $p_i$'s and zeros $z_i$'s in $L(s)$. In general, we do not have $\prod_{i=1}^{n}(-z_i) = K' \prod_{i=1}^{q}(-r_i)$, \textit{i.e.} $T(0) = 1$; otherwise, one can derive the corresponding CSBI by applying the analysis in the proof of~\hyperref[thm1]{Theorem~1} to this specific set of $p_i$'s and $z_i$'s. Hence, applying~\hyperref[lem2]{Lemma~2} to $\tilde T(s)$ in~\eqref{eq33} yields
	\begin{equation}\label{Cont_Bode5}
	\begin{split}
	\dfrac{1}{2\pi}\int_{-\infty}^{\infty} \ln |T(s)| \ \dfrac{d\omega}{\omega^2} & = 	\dfrac{1}{2\pi} \int_{-\infty}^{\infty} \ln |\tilde{T}\left(\tilde{s}\right)| d\tilde{\omega}\\
	& = \dfrac{1}{2} \left[ \sum_{i=1}^{n} \left| \textrm{Re} \ z_i^{-1}\right| - \sum_{i=1}^{q} \left| \textrm{Re} \ r_i^{-1} \right| \right]    + \dfrac{1}{\pi} \int_{-\infty}^{\infty} \ln \left|   \dfrac{\prod_{i=1}^{n}(-z_i)}{K'  \prod_{i=1}^{q}(-r_i)}  \right| d\tilde{\omega}.
	\end{split}
	\end{equation}
	Since the last term on the RHS of~\eqref{Cont_Bode5} is unbounded, in general, the CSBI is unbounded when $K = -1$.

	When $K \neq -1$, the complementary sensitivity function $T(s)$ can be equivalently expressed as
	\begin{align}
		T_1(s) & = T(s) = \dfrac{ K  \prod_{i=1}^{n}(s - z_i) }{s^{n-l} \prod_{i=1}^{l}(s - p_i) + {K  \prod_{i=1}^{n}(s - z_i) }}, \label{v0_T1}\\	
		T_2(s) & = T(s) = \dfrac{K}{K+1} \cdot \dfrac{\prod_{i=1}^{n}(s - z_i) }{ \prod_{i=1}^n (s - r_i) }. \label{v0_T2}
	\end{align}
	Equating the denominators of~\eqref{v0_T1} and~\eqref{v0_T2} and expanding the polynomials give the following equation
	\begin{equation}\label{v0_Den1}
		\begin{split}
			s^n - \left( \sum_{i=1}^{l}p_i \right) s^{n-1} + \cdots + \prod_{i=1}^{l}\left(-p_i\right)  s^{n-l} & + K \left[s^n - \left(\sum_{i=1}^{n} z_i\right) s^{n-1} + \cdots + \prod_{i=1}^{n}\left(-z_i\right)\right] \\
			&=(K+1) \left[s^n - \left(\sum_{i=1}^{n} r_i\right)s^{n - 1} + \cdots + \prod_{i=1}^{n}\left(-r_i\right)\right].
		\end{split}
	\end{equation}
	Adopting the frequency transformation~\eqref{freq_trans} and some coefficient manipulations, \eqref{v0_T1} and~\eqref{v0_T2} can be transformed into
	\begin{align}
		\tilde{T}_1(\tilde{s}) 	& = T_1(s) = \dfrac{K \prod_{i=1}^{n} (-z_i) \cdot \prod_{i=1}^{n} (\tilde{s} - z_i^{-1})}{\prod_{i=1}^{l}(-p_i) \cdot \prod_{i=1}^{l}(\tilde{s} - p_i^{-1})+ K \prod_{i=1}^{n}(-z_i) \cdot \prod_{i=1}^{n}(\tilde{s} - z_i^{-1})}, \label{v0_T3}\\
		\tilde{T}_2(\tilde{s}) & = T_2(s) = \dfrac{K  \prod_{i=1}^{n}(-z_i)}{(K+1)  \prod_{i=1}^{n} (-r_i)  } \cdot \dfrac{\prod_{i=1}^{n} (\tilde{s} - z_i^{-1})}{\prod_{i=1}^{n}(\tilde{s} - r_i^{-1})}. \label{v0_T4}
	\end{align}
	Equating and expanding the denominators in~\eqref{v0_T3} and~\eqref{v0_T4}, we have
	\begin{equation}\label{v0_Den2}
		\begin{split}
			\prod_{i=1}^{l}(-p_i) \left[ \tilde{s}^l - \left( \sum_{i=1}^{l} p_i^{-1} \right) \tilde{s}^{l-1} + \cdots + \prod_{i=1}^{l}\left( -p_i^{-1} \right) \right] + K \prod_{i=1}^{n}(-z_i) \Bigg[ \tilde{s}^{n} - \left( \sum_{i=1}^{n} z_i^{-1} \right)\tilde{s}^{n-1} + \cdots\\
			+ \prod_{i=1}^{n}\left(-z_i^{-1}\right) \Bigg] = (K+1) \prod_{i=1}^{n}\left(-r_i\right) \left[ \tilde{s}^n - \left( \sum_{i=1}^{n} r_i^{-1} \right) \tilde{s}^{n-1} + \cdots + \prod_{i=1}^{n}\left(-r_i^{-1}\right)  \right].
		\end{split}
	\end{equation}
	Similar to the proof of~\hyperref[thm1]{Theorem~1}, we analyze the CSBIs for $L(s)$ with different amounts of pure integrators.
	
	\vspace{0.5em}	
	
	\noindent \textit{Case 1}: No pure integrator exists in $L(s)$, \textit{i.e.} $l = n$.
	
	When $l = n$, equating the constant terms in~\eqref{v0_Den1} gives, $\prod_{i=1}^{n}(-p_i) + K \prod_{i=1}^{n}(-z_i) = (K+1) \prod_{i=1}^{n}(-r_i)$.	Equating the coefficients of $\tilde{s}^{n-1}$ terms in~\eqref{v0_Den2} yields, $\prod_{i=1}^{n}(-p_i) \left( \sum_{i=1}^{n} p_i^{-1} \right) + K \prod_{i=1}^{n}(-z_i) \left( \sum_{i=1}^{n} z_i^{-1} \right) = (K+1) \prod_{i=1}^{n}(-r_i) \left( \sum_{i=1}^{n} r_i^{-1} \right)$. Applying~\hyperref[lem2]{Lemma~2} to $\tilde{T}_2(\tilde{s})$ in~\eqref{v0_T4}, the complementary sensitivity Bode integral satisfies
	\begin{equation}\label{Cont_Bode6}
			\dfrac{1}{2\pi}\int_{-\infty}^{\infty} \ln |T(s)| \ \dfrac{d\omega}{\omega^2} = \dfrac{1}{2} \left[ \sum_{i=1}^{n} \left| \textrm{Re} \ z_i^{-1}\right| - \sum_{i=1}^{n} \left| \textrm{Re} \ r_i^{-1} \right| \right]    + \dfrac{1}{\pi} \int_{-\infty}^{\infty} \ln \dfrac{\left|K \prod_{i=1}^{n}\left(-z_i\right)\right|}{\left|\prod_{i=1}^{n}\left(-p_i\right) + K \prod_{i=1}^{n}  \left(-z_i\right)\right| } d\tilde{\omega}.\\
	\end{equation}
	Bode integral in~\eqref{Cont_Bode6} is bounded when at least one $p_i = 0$, which contradicts the fact that $p_i \neq 0$ when we defined $L(s)$ in~\eqref{eq32}. Hence, the integral is unbounded when $l = n$. Further analysis on this case refers to the comments after~\eqref{Cont_Bode2}.
		
	\vspace{0.5em}	
	
	\noindent \textit{Case 2}: Single pure integrator exists in $L(s)$, \textit{i.e.} $l = n-1$.
	
	When $l = n - 1$, equating the constant terms in~\eqref{v0_Den1} gives $K \prod_{i=1}^{n} (-z_i) = (K+1) \prod_{i=1}^{n}(-r_i)$. Equating the coefficients of $\tilde{s}^{n-1}$ terms in~\eqref{v0_Den2} gives $-\prod_{i=1}^{n-1}(-p_i) + K\prod_{i=1}^{n}(-z_i)(\sum_{i=1}^{n}z_i^{-1}) = (K+1) \cdot \prod_{i=1}^{n}(-r_i)( \sum_{i=1}^{n}r_i^{-1} )$. From~\eqref{res_1} we have $\sum_{i=1}^{n} \left|\textrm{Re} \  z_i^{-1} \right| = \sum_{i} \textrm{Re} \ z^{-1}_{u_i} - \sum_{i} \textrm{Re} \ z_{s_i}^{-1}$. Since all the closed-loop poles $r_i$'s are with negative real parts, the sum $\sum_{i=1}^{n}|\textrm{Re} \ r_i^{-1}| = \prod_{i=1}^{n-1}(-p_i) / \allowbreak [K\prod_{i=1}^{n}(-z_i)] - \sum_{i=1}^{n}z_i^{-1}$. Applying~\hyperref[lem2]{Lemma~2} to $\tilde{T}_2(\tilde{s})$ in~\eqref{v0_T4}, the CSBI gives
	\begin{equation}\label{eq41}
		\dfrac{1}{2\pi}\int_{-\infty}^{\infty} \ln |T(s)| \ \dfrac{d\omega}{\omega^2} = \sum_{i} \mathrm{Re} \ z_{u_i}^{-1} - \dfrac{1}{2K} \cdot \dfrac{\prod_{i=1}^{n-1}(-p_i)}{\prod_{i=1}^{n}(-z_i)}.
	\end{equation}
	
	\vspace{0.5em}	
	
	\noindent \textit{Case 3}: Two or more pure integrators exist in $L(s)$, \textit{i.e.} $0 \leq l \leq n - 2$.
	
	When $0 \leq l \leq n - 2$, equating the constant terms in~\eqref{v0_Den1} gives $K \prod_{i=1}^{n}(-z_i) = (K+1) \prod_{i=1}^{n}(-r_i)$. Equating the coefficients of $\tilde{s}^{n-1}$ terms in~\eqref{v0_Den2} gives $K \prod_{i=1}^{n}(-z_i)  ( \sum_{i=1}^{n}z_i^{-1} ) = (K+1) \prod_{i=1}^{n}(-r_i) \cdot ( \sum_{i=1}^{n} r_i^{-1} )$. Meanwhile, we have $\sum_{i=1}^{n} \left|\textrm{Re} \  z_i^{-1} \right| = \sum_{i} \textrm{Re} \ z^{-1}_{u_i} - \sum_{i} \textrm{Re} \ z_{s_i}^{-1}$ and $\sum_{i=1}^{n} \left|\textrm{Re} \  r_i^{-1} \right| = -\sum_{i=1}^{n}z_i^{-1} =- \sum_{i} \textrm{Re} \ z^{-1}_{u_i} - \sum_{i} \textrm{Re} \ z_{s_i}^{-1}$. Hence, applying~\hyperref[lem2]{Lemma~2} to $\tilde{T}_2(\tilde{s})$ in~\eqref{v0_T4}, the complementary sensitivity Bode integral satisfies
	\begin{equation}\label{eq42}
		\dfrac{1}{2\pi}\int_{-\infty}^{\infty} \ln |T(s)| \ \dfrac{d\omega}{\omega^2} = \sum_{i} \mathrm{Re} \ z_{u_i}^{-1}.
	\end{equation}
	
	\noindent Summarizing the results in~\eqref{Cont_Bode6},~\eqref{eq41} and~\eqref{eq42} gives~\hyperref[cor2]{Corollary~2}. This completes the proof.
\end{proof}

\begin{remark}
	The results on continuous-time CSBI have been presented in~\hyperref[thm1]{Theorem~1} for systems with $\nu \geq 1$ and~\hyperref[cor2]{Corollary~2} for systems with $\nu = 0$, respectively. In general, our results, \hyperref[thm1]{Theorem~1} and \hyperref[cor2]{Corollary~2}, derived via the simplified approach match the earlier result, \hyperref[lem1]{Lemma~1}, derived by employing Cauchy integral theorem. Nevertheless, more detailed and explicit relationship between CSBI and the features of $L(s)$ and more relaxed constraints on $L(s)$ are attained by using the simplified approach. In both cases of $\nu \geq 1$ and $\nu = 0$, the values of CSBIs are mainly determined by the non-minimum phase zeros $z_{u_i}$, while the format of CSBI varies depending on the amount of pure integrators in $L(s)$. When only single pure integrator exists in $L(s)$, the value of CSBI, whose explicit expression was not reported in the previous papers, is also impacted by the leading coefficient $K$, minimum phase zeros $z_{s_i}$, as well as the poles $p_i$ in $L(s)$. Our derivations also show that the continuous-time CSBI defined in~\cite{Middleton_1991} is unbounded, when $L(s)$ does not contain any pure integrator, which is a limitation of this type of CSBI and did not receive enough attention in  recent papers~\cite{Seron_2012, Wan_CDC_2018}. Meanwhile, the analytic constraint on $\ln |T(1/s)|$, as well as the initial value constraint $T(0) \neq 0$, are not necessary in the proofs of~\hyperref[thm1]{Theorem~1} and~\hyperref[cor2]{Corollary~2}, which can also be extended to the scenario when some of the closed-loop poles $r_i$'s are on the imaginary axis.
\end{remark}

\section{Discrete-Time Complementary Sensitivity Bode Integral}\label{sec4}

Discrete-time CSBI is investigated in this section by using a simplified approach developed on the basis of~\hyperref[lem4]{Lemma~4}. Compared with the continuous-time system, since the frequency domain of discrete-time system is bounded, we do not need to worry about the unboundedness of discrete-time CSBI, and hence neither weighting function nor frequency inversion is involved in this section. When the relative degree $\nu \geq 1$ in $L(s)$, we have the following result.

\begin{theorem}\label{thm3}
	For an open-loop transfer function $L(z)$ with relative degree $\nu \geq 1$ and stable closed-loop system, the discrete-time CSBI satisfies
	\begin{equation}\label{thm3eq}
		\dfrac{1}{2\pi}\int_{-\pi}^{\pi} \log |T(z)| d\omega = \sum_{i} \log |z_{u_i}|  + \log |K|,
	\end{equation}
	where $z_{u_i}$'s are the unstable zeros in $L(z)$, and $K$ is the leading coefficient in~\eqref{dis_tf}.
\end{theorem}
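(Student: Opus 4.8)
The plan is to reduce the left-hand side of~\eqref{thm3eq} to a sum of integrals of the form $\int_{-\pi}^{\pi}\log\left|\mathrm{e}^{j\omega}-a\right|d\omega$ and to evaluate each of them with~\hyperref[lem4]{Lemma~4}. First I would write $T(z)$ in a factored form analogous to~\eqref{C_T1}--\eqref{C_T2}. Substituting~\eqref{dis_tf} into $T(z)=L(z)/\left(1+L(z)\right)$ gives
\[
T(z)=\dfrac{K\prod_{i=1}^{m}(z-z_i)}{\prod_{i=1}^{n}(z-p_i)+K\prod_{i=1}^{m}(z-z_i)}.
\]
Because the relative degree is $\nu=n-m\geq1$, the summand $K\prod_{i=1}^{m}(z-z_i)$ has degree $m<n$, so the denominator is a monic polynomial of degree $n$; denoting its roots by $r_i$, closed-loop stability forces $|r_i|<1$ for every $i$, and hence
\[
T(z)=K\cdot\dfrac{\prod_{i=1}^{m}(z-z_i)}{\prod_{i=1}^{n}(z-r_i)},\qquad
\log|T(z)|=\log|K|+\sum_{i=1}^{m}\log\left|\mathrm{e}^{j\omega}-z_i\right|-\sum_{i=1}^{n}\log\left|\mathrm{e}^{j\omega}-r_i\right|,
\]
where $z=\mathrm{e}^{j\omega}$.

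Next I would integrate this identity over $\omega\in[-\pi,\pi]$ and divide by $2\pi$. The constant term contributes $\log|K|$, and for each linear factor \hyperref[lem4]{Lemma~4} (with its identity divided by $2$) gives that $\tfrac{1}{2\pi}\int_{-\pi}^{\pi}\log\left|\mathrm{e}^{j\omega}-a\right|d\omega$ equals $0$ when $|a|\leq1$ and $\log|a|$ when $|a|>1$. Since every closed-loop pole satisfies $|r_i|<1$, the whole sum $\sum_{i=1}^{n}\tfrac{1}{2\pi}\int_{-\pi}^{\pi}\log\left|\mathrm{e}^{j\omega}-r_i\right|d\omega$ vanishes. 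Among the open-loop zeros, only the strictly unstable ones $z_{u_i}$ survive, each adding $\log|z_{u_i}|$, while minimum-phase zeros and any zeros on the unit circle contribute nothing. Collecting the surviving terms yields $\tfrac{1}{2\pi}\int_{-\pi}^{\pi}\log|T(z)|d\omega=\log|K|+\sum_{i}\log|z_{u_i}|$, which is~\eqref{thm3eq}.

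The step requiring the most care is the factorization: one must check that $\nu\geq1$ is exactly what makes the closed-loop denominator monic of degree $n$, so that $K$ appears untouched as the numerator coefficient --- this is precisely the normalization in~\hyperref[lem3]{Lemma~3}, and in the biproper case $\nu=0$ the leading coefficient would instead be $K+1$ and the analysis would change, as in~\hyperref[cor2]{Corollary~2}. Otherwise there is no genuine obstacle: the bounded frequency domain means no weighting function or frequency inversion is needed, the boundary case $|z_i|=1$ is already absorbed by~\hyperref[lem4]{Lemma~4}, and the only bookkeeping nuance is the factor of $2$ relating $\log|a|^2$ in~\hyperref[lem4]{Lemma~4} to the $\log|a|$ that appears after splitting the modulus.
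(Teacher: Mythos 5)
Your proposal is correct and follows essentially the same route as the paper: factor $T(z) = K\prod_{i=1}^{m}(z-z_i)/\prod_{i=1}^{n}(z-r_i)$ using the stable closed-loop poles, then apply \hyperref[lem4]{Lemma~4} term by term so that the $r_i$ contributions vanish and only the unstable zeros and $\log|K|$ survive. The extra care you take in noting that $\nu\geq 1$ is what keeps the closed-loop denominator monic of degree $n$ (so $K$ passes through unchanged) is a point the paper leaves implicit, but the argument is the same.
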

\begin{proof}[\textbf{Proof}]
	When relative degree $\nu \geq 1$, the open-loop transfer function~\eqref{dis_tf} can be expressed as
	\begin{equation*}
		L(z) = K \cdot \dfrac{\prod_{i=1}^{m}(z - z_i)}{\prod_{i=1}^{n}(z - p_i)},
	\end{equation*}
	where $n \geq m + 1$. Then the discrete-time complementary sensitivity function takes the form
	\begin{equation}\label{eq44}
		T(z) = \dfrac{L(z)}{1 + L(z)} = \dfrac{K \cdot \prod_{i=1}^{m}(z - z_i)}{\prod_{i=1}^{n}(z - p_i) + K \cdot \prod_{i=1}^{m}(z- z_i)} = K \cdot \dfrac{\prod_{i=1}^{m}(z - z_i)}{\prod_{i=1}^{n}(z - r_i)}.
	\end{equation}
	Since the closed-loop system is stable and all the closed-loop poles $r_i$'s are within the unit disk, applying~\hyperref[lem4]{Lemma~4} to~\eqref{eq44}, the discrete-time CSBI satisfies	
	\begin{equation}
		\dfrac{1}{2\pi}\int_{-\pi}^{\pi} \log |T(z)| d\omega = \dfrac{1}{4\pi} \int_{-\pi}^{\pi} \log \left| K \cdot \dfrac{\prod_{i=1}^{m}(z - z_i)}{\prod_{i=1}^{n}(z - r_i)}\right|^2 d\omega = \sum_{i} \log |z_{u_i}|  + \dfrac{1}{2\pi} \int_{-\pi}^{\pi} \log | K | d\omega,
	\end{equation}
	which implies~\eqref{thm3eq} in~\hyperref[thm3]{Theorem~3}. This completes the proof.
\end{proof}

We then consider the CSBI of a biproper open-loop system, $\textit{i.e.}$ when $\nu = n - m = 0$.

\begin{corollary}\label{cor4}
	For an open-loop transfer function $L(z)$ with relative degree $\nu = 0$ and stable closed-loop system, the discrete-time CSBI satisfies
	\begin{equation}\label{cor4eq}
	\dfrac{1}{2\pi}\int_{-\pi}^{\pi} \log |T(z)| d\omega = \sum_{i} \log |z_{u_i}|  + \log \left| \dfrac{K}{1+K} \right|.
	\end{equation}
\end{corollary}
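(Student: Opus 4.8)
The plan is to mirror the proof of~\hyperref[thm3]{Theorem~3}, the only new ingredient being the extra leading coefficient that the closed-loop denominator inherits when $\nu = 0$. First I would write the biproper open-loop transfer function from~\eqref{dis_tf} with $m = n$, namely $L(z) = K\prod_{i=1}^{n}(z - z_i)/\prod_{i=1}^{n}(z - p_i)$, and form the complementary sensitivity function
\[
T(z) = \frac{L(z)}{1 + L(z)} = \frac{K\prod_{i=1}^{n}(z - z_i)}{\prod_{i=1}^{n}(z - p_i) + K\prod_{i=1}^{n}(z - z_i)}.
\]
Provided $K \neq -1$ (which is implicit in the statement, since $\log|K/(1+K)|$ must be finite), the $z^n$ term in the denominator does not cancel, so the denominator has degree $n$ with leading coefficient $1 + K$, and hence $T(z)$ has exactly $n$ closed-loop poles $r_1,\dots,r_n$, all inside the open unit disk by closed-loop stability. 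Factoring out this leading coefficient yields
\[
T(z) = \frac{K}{1 + K}\cdot\frac{\prod_{i=1}^{n}(z - z_i)}{\prod_{i=1}^{n}(z - r_i)},
\]
the exact analogue of the last expression in~\eqref{eq44}.

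Next I would take logarithms and apply~\hyperref[lem4]{Lemma~4} term by term. Writing $\log|T(z)| = \tfrac12\log|T(z)|^2$ and splitting the product,
\[
\frac{1}{2\pi}\int_{-\pi}^{\pi}\log|T(z)|\,d\omega
= \log\left|\frac{K}{1+K}\right|
+ \frac{1}{4\pi}\sum_{i=1}^{n}\int_{-\pi}^{\pi}\log|z - z_i|^2\,d\omega
- \frac{1}{4\pi}\sum_{i=1}^{n}\int_{-\pi}^{\pi}\log|z - r_i|^2\,d\omega,
\]
where the constant term comes from $\tfrac{1}{2\pi}\int_{-\pi}^{\pi}\log|K/(1+K)|\,d\omega = \log|K/(1+K)|$. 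By~\hyperref[lem4]{Lemma~4} each integral over a closed-loop pole $r_i$ vanishes since $|r_i| < 1$, while the integral over an open-loop zero $z_i$ contributes $0$ if $|z_i| \le 1$ and $2\pi\log|z_i|^2$ if $|z_i| > 1$. Collecting the surviving terms, the sum over $\{i : |z_i| > 1\}$ is precisely $\sum_i \log|z_{u_i}|$ with $z_{u_i}$ the unstable zeros of $L(z)$, which gives~\eqref{cor4eq}.

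The computation is routine once the factorization is in place; the only genuine subtlety is the role of $K$. When $\nu = 0$ the closed-loop denominator has leading coefficient $1 + K$ rather than $1$, which is exactly why the $\log|K|$ of~\hyperref[thm3]{Theorem~3} becomes $\log|K/(1+K)|$ here, and it is also why $K = -1$ must be excluded: in that case the $z^n$ terms cancel, $T(z)$ becomes improper with an $n$th-degree numerator over a denominator of degree at most $n-1$, the factored form to which~\hyperref[lem4]{Lemma~4} applies is no longer available, and (as in the $K = -1$ discussion in the proof of~\hyperref[cor2]{Corollary~2}) the integral is generically unbounded. I would therefore open the proof by recording the standing assumption $K \neq -1$; after that, the argument is a one-line adaptation of the proof of~\hyperref[thm3]{Theorem~3}, so I do not expect any real obstacle beyond this bookkeeping.
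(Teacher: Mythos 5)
Your proof is correct and follows essentially the same route as the paper's: exclude $K=-1$, factor the leading coefficient $1+K$ out of the closed-loop denominator to get $T(z) = \frac{K}{1+K}\cdot\prod_{i=1}^{n}(z-z_i)/\prod_{i=1}^{n}(z-r_i)$, and apply Lemma~4, with the stable closed-loop poles contributing nothing. The one small inaccuracy is your aside on $K=-1$: because the discrete-time frequency domain is bounded, the paper notes the integral can still come out finite in that case (the exclusion there is motivated by causality of the closed loop, not by unboundedness as in the continuous-time Corollary~2), but this does not affect your main argument.
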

\begin{proof}
	When relative degree $\nu = 0$, the discrete-time open-loop transfer function~\eqref{dis_tf} can be expressed as follows	
	\begin{equation*}
		L(z) = K \cdot \dfrac{\prod_{i=1}^{n}(z - z_i)}{\prod_{i=1}^{n}(z - p_i)},
	\end{equation*}
	where $K = \lim_{z\rightarrow \infty}L(z)$. The complementary sensitivity function then becomes
	\begin{equation}\label{eq47}
		T(z) = \dfrac{L(z)}{1 + L(z)} = \dfrac{K \cdot \prod_{i=1}^{n} (z - z_i) }{\prod_{i=1}^{n}(z - p_i) + K \cdot \prod_{i=1}^{n}(z - z_i)} = \dfrac{K}{1 + K} \cdot \dfrac{\prod_{i=1}^{n}(z - z_i)}{\prod_{i=1}^{n}(z - r_i)}.
	\end{equation}
	When $K = -1$, the order of denominator in~\eqref{eq47} will be less than $n$, \textit{i.e.} at least one closed-loop pole $r_i$ is out of unit circle and at infinity, which implies that the closed-loop system is not causal. Hence, in practice, the leading coefficient $K = -1$ is not allowed when $\nu = 0$~\cite{Wu_TAC_1992}, though one can still obtain a bounded value by applying~\hyperref[lem4]{Lemma~4} to~\eqref{eq47} or computing the integral~\eqref{eq7} directly. When $K \neq -1$, since the closed-loop system is stable and all $r_i$'s are inside the unit disk, applying~\hyperref[lem4]{Lemma~4} to~\eqref{eq47}, the discrete-time CSBI satisfies
	\begin{equation}
		\dfrac{1}{2\pi}\int_{-\pi}^{\pi} \log |T(z)| d\omega = \dfrac{1}{4\pi} \int_{-\pi}^{\pi} \ln \left| \dfrac{K}{1 + K} \cdot \dfrac{\prod_{i=1}^{n}(z - z_i)}{\prod_{i=1}^{n}(z - r_i)} \right|^2 d\omega =  \sum_{i} \log |z_{u_i}|  + \dfrac{1}{2\pi} \int_{-\pi}^{\pi} \log \left| \dfrac{K}{1+K} \right| d\omega,
	\end{equation}
	which implies~\eqref{cor4eq} in~\hyperref[cor4]{Corollary~4}. This completes the proof.
\end{proof}
\begin{remark}
	The results on discrete-time CSBI have been presented in~\hyperref[thm3]{Theorem~3} for systems with $\nu \geq 1$ and in~\hyperref[cor4]{Corollary~4} for systems with $\nu = 0$, respectively. These results, derived by using the simplified approach, match the previous results in~\cite{Sung_IJC_1989} generally. For both cases, $\nu \geq 1$ and $\nu = 0$, the CSBI is proportional to the sum of the logarithms of unstable or non-minimum phase zeros. However, the difference between the second terms on the RHS of~\eqref{thm3eq} and~\eqref{cor4eq} was not noted in the previous papers.
\end{remark}

\section{Illustrative Examples}\label{sec5}

Illustrative examples that examine the previous theorems and corollaries are given in this section. First, we consider the following open-loop transfer function with two pure integrators

\begin{equation}\label{eq50}
	L_1(s) = -1.164 \times 10^{-4} \cdot \dfrac{  (s-10)(s + 0.0625)}{s^2 \cdot (s+ 10)},
\end{equation}
where the relative degree $\nu = 1$, and a non-minimum phase zero is located at $s = 10$. The closed-loop complementary sensitivity function is
\begin{equation}\label{eq51}
	T_1(s) = \dfrac{-1.164 \times 10^{-4} \cdot (s-10)(s+0.0625)}{(s+10)(s^2 + 1.149\times 10^{-4} \cdot s + 7.725 \times 10^{-6})},
\end{equation}
which is a closed-loop stable plant with three closed-loop poles located at $s = 5.745 \times 10^{-5} \pm 2.697 \times 10^{-3} i$ and $-10$. The numerical integration of $T_1(s)$ in~\eqref{eq51} gives $(2\pi)^{-1}\int_{-\infty}^{\infty} \ln |T_1(s)| / \omega^2 {d\omega} \break \approx 0.1000$. Applying~\hyperref[thm1]{Theorem~1} to~\eqref{eq50} yields $(2\pi)^{-1} \int_{-\infty}^{\infty} \ln |T_1(s)| / \omega^2 d\omega = \sum_{i} \mathrm{Re} \ z_{u_i}^{-1} = {1}/{10} = 0.1$, which matches the numerical result.

Next, we consider an open-loop transfer function with only one pure integrator, \textit{i.e.} $l = n-1$,
\begin{equation}\label{eq52}
	L_2(s) = \dfrac{-5.77 (s-10)(s+1)}{s(s+10)(s+1)},
\end{equation}
which, with relative degree $\nu = 1$, has a non-minimum phase zero at $s = 10$. The closed-loop complementary sensitivity function of~\eqref{eq52} is
\begin{equation}\label{eq53}
	T_2(s) = \dfrac{-5.77 (s-10)}{s^2 + 4.23s + 57.7},
\end{equation}
which is closed-loop stable with two closed-loop poles at $s = -2.115 \pm 7.296 i$. The numerical integration of~\eqref{eq53} gives $(2\pi)^{-1} \int_{-\infty}^{\infty} \ln \left|T_2(s) \right| / \omega^2 d\omega \approx 0.0133$. Applying~\hyperref[thm1]{Theorem~1} to~\eqref{eq52} yields $(2\pi)^{-1} \int_{-\infty}^{\infty} \ln \left|T_2(s) \right| / \omega^2 d\omega = \sum_{i} \mathrm{Re} \ z_{u_i}^{-1} - (2K)^{-1} \cdot \prod_{i=1}^{n-1}(-p_i) / \prod_{i=1}^{m}(-z_i) = 77 / 5770 \approx 0.0133$, which also matches the numerical result.

Meanwhile, by following  similar procedures as above, one can easily verify that the CSBI of the open-loop transfer function $L_3(s) = -2.0348 \cdot(s-1) / (s^2 + 3s + 2)$ is unbounded, which validates the last case in~\hyperref[thm1]{Theorem~1}. In the end, an illustrative example is given to examine~\hyperref[cor4]{Corollary~4}, which was rarely noted before. Consider a biproper discrete-time system $L_4(z) = 2(z+2) / (z+0.5)$. By using numerical integration, the CSBI of $L_4(z)$ gives $(2\pi)^{-1} \int_{-\pi}^{\pi} \log |L_4(z) / (1+L_4(z))| d\omega \approx 0.4150 \approx \log 2 + \log(2 / 3)$, which justifies~\hyperref[cor4]{Corollary~4}. For brevity, more examples that verify Corollary 2 and Theorem 3 are omitted in this note and left to the interested readers.

\section{Conclusions}\label{sec6}

A simplified approach for analyzing complementary sensitivity trade-offs in both continuous-time and discrete-time systems was proposed in this note. A comprehensive relationship between CSBIs and the features of open-loop transfer functions was interpreted by using this simplified approach. A few illustrative examples were presented to justify the results.

\section*{Acknowledgment}
This work was partially supported by AFOSR and NSF. The authors would specially acknowledge the readers and staff on arXiv.org.

\bibliographystyle{IEEEtran}
\bibliography{ref}

\end{document}